\newtheorem{theorem}{Theorem}
\newtheorem{lemma}{Lemma}
\newenvironment{myproof}[2]{\:\:\:\:\:\textit{Proof of {#1} {#2}: \noindent}}{\hfill$\blacksquare$}
\newtheorem{proposition}{Proposition}
\DeclareMathOperator{\Hinf}{\mathcal{H}_{\infty}}
\DeclareMathOperator{\H2}{\mathcal{H}_{2}}
\title{\LARGE \bf Model Reduction of Consensus Network Systems via Selection of Optimal Edge Weights and Nodal Time-Scales}
\author{Ralph Sabbagh and Dany Abou Jaoude
\thanks{This paper is accepted for publication in the 2022 American Control Conference (ACC). © 2022 IEEE. Personal use of this material is permitted. Permission from IEEE must be obtained for all other uses, in any current or future media, including reprinting/republishing this material for advertising or promotional purposes, creating new collective works, for resale or redistribution to servers or lists, or reuse of any copyrighted component of this work in other works. This work is supported by the University Research Board (URB) at the American University of Beirut (AUB).}
\thanks{The authors are with the Control and Optimization Lab, Department of Mechanical Engineering, Maroun Semaan Faculty of Engineering and Architecture, American University of Beirut, Lebanon.}
\thanks{Corresponding author: D. Abou Jaoude, {\tt\small da107@aub.edu.lb}}}
\begin{document}
%%%%%%%%%%%%%%%%%%%%%%%%%%%%%%%%%%%%%%%%%%%%%%%%%%%%%%%%%%%%%%%%%%%%%%%%%%%%%%%%
\maketitle
\thispagestyle{empty}
\pagestyle{empty}
%%%%%%%%%%%%%%%%%%%%%%%%%%%%%%%%%%%%%%%%%%%%%%%%%%%%%%%%%%%%%%%%%%%%%%%%%%%%%%%%
\begin{abstract}
This paper proposes model reduction approaches for consensus network systems based on a given clustering of the underlying graph. Namely, given a consensus network system of time-scaled agents evolving over a weighted undirected graph and a graph clustering, a parameterized reduced consensus network system is constructed with its edge weights and nodal time-scales as the parameters to be optimized. $\mathcal{H}_{\infty}$- and $\mathcal{H}_2$-based optimization approaches are proposed to select the reduced network parameters such that the corresponding approximation errors, i.e., the $\Hinf$- and $\H2$-norms of the error system, are minimized. The effectiveness of the proposed model reduction methods is illustrated via a numerical example.
\end{abstract}
%%%%%%%%%%%%%%%%%%%%%%%%%%%%%%%%%%%%%%%%%%%%%%%%%%%%%%%%%%%%%%%%%%%%%%%%%%%%%%%%
\section{INTRODUCTION}
High-dimensional models of dynamical network systems are increasingly studied in the context of multi-agent systems across various applications ranging from network optimization \cite{9296294} to synchronization and distributed coordination \cite{review,ZlizDuan}. In \cite{Farhat_2021,9115210}, $\Hinf$- and $\H2$-based performance analyses and network designs are performed for consensus network systems using the edge variant representation of the consensus protocol. As network systems grow in size and complexity, it becomes desirable to find lower-order approximant network models that exhibit similar input-output behavior. Structure-preserving model
reduction techniques for interconnected systems based on the
balanced truncation method are presented in \cite{CHENG20172451,AbouJaoudeBalancedTruncation,AbouJaoudeLPV}. In \cite{CHENG20172451}, a generalized balanced truncation approach is proposed in which the resulting reduced order model is amenable to a network interpretation. However, no clear relation between the original and reduced network structures is established. In fact, the reconstructed reduced network system is found to evolve exclusively over a complete graph structure. In \cite{AbouJaoudeBalancedTruncation,AbouJaoudeLPV}, spatial states are used to represent the interconnections between the agents of the network system. The balanced truncation methods proposed therein apply for systems that are strongly stable, i.e., systems that possess appropriately structured generalized gramians. In \cite{AbouJaoudeCoprimeFactors} and \cite{AbouJaoudeCoprimeFactorsIJC}, coprime factors reduction methods are proposed as extensions to the balanced truncation methods in \cite{AbouJaoudeBalancedTruncation} and \cite{AbouJaoudeLPV} for systems that are not strongly stable but are strongly stabilizable and strongly detectable. The method in \cite{AbouJaoudeCoprimeFactors} is illustrated on a consensus network system example.

In the present paper, we propose model reduction methods for finding a reduced consensus network system that approximates the input-output behavior of a given consensus network system, based on a given clustering of the underlying graph. In clustering-based model reduction techniques, major emphasis is placed on finding an optimal clustering, after which the reduced network system is constructed via projection using the corresponding clustering characteristic matrix. A survey of clustering-based model reduction techniques is found in \cite{doi:10.1146/annurev-control-061820-083817}. Instead of focusing on cluster selection, the works of \cite{8795611,cheng2020reduced} propose an $\mathcal{H}_2$-based model reduction technique for consensus network systems that optimizes the edge weights of the reduced network model for a given \textit{predefined} graph clustering. Namely, the parameterized edge weights are tuned iteratively via an optimization algorithm to minimize the $\mathcal{H}_2$-norm of the error system. The proposed algorithm was shown therein to be effective, i.e., to improve the reduction error, when the edge weights are initialized as the outcome of standard clustering-based projection methods.

In this paper, we extend the scope of the work in  \cite{8795611,cheng2020reduced} in two ways. First, we consider reduced order systems that are parameterized using \textit{both edge weights and nodal time-scales}. That is, in comparison with the works of \cite{8795611} and \cite{cheng2020reduced}, we allow for additional degrees of freedom in the reduced order system, thereby allowing to further reduce the reduction error. Second, in addition to extending the $\H2$-based model reduction method in \cite{8795611} along with the optimization approach in \cite{cheng2020reduced} to further account for optimizing the nodal time-scales, we propose a novel $\Hinf$-based model reduction method for the selection of both edge weights and nodal time-scales. Our proposed $\Hinf$-based model reduction method is analogous, yet complementary, to the methods in \cite{8795611} and \cite{cheng2020reduced} in that it optimizes a different norm of the error system. Our $\Hinf$-based model reduction method remains novel even when the edge weights are the only parameters considered for the reduced order system. To conclude, we note that the model reduction approach in \cite{cheng2020reduced} allows for dealing with directed graphs, whereas this paper and the work in \cite{8795611} focus on undirected graphs. The main contributions of our paper are summarized as follows:
\begin{itemize}
\item We propose an $\mathcal{H}_\infty$-based model reduction method for consensus network systems that finds an approximant reduced network system via iterative tuning of its parameterized edge weights and nodal time-scales.
\item We extend the $\mathcal{H}_2$-based model reduction method of \cite{8795611} by parameterizing the nodal time-scales in the reduced order system in addition to the edge weights.
\item We show the efficacy of our methods via an example, wherein our algorithms are initialized using the outcome of the Petrov-Galerkin Projection (PGP) paradigm \cite{S3}.
\end{itemize}
The addition of extra parameters, i.e., the nodal time-scales, in the reduced order system increases the complexity of the optimization problems to be solved. To solve the formulated problems, we follow the approach adopted in \cite{cheng2020reduced}, which is based on the work of \cite{convexconcave} that proposes an iterative method for handling bilinear matrix inequality constraints whose left-hand-side admits a convex-concave decomposition. Specifically, our proposed approaches optimize the edge weights for fixed nodal time-scales and the nodal time-scales for fixed edge weights. While the weights optimization problems are readily expressed in a form that can be handled by the method of \cite{convexconcave}, as will be seen, the time-scales  optimization problems require more involved manipulations to be expressed in such a form.

The paper is structured as follows. In Section \ref{sectiontwo}, the notation, preliminaries, and problem setup of the paper are presented. In Section \ref{sectionthree}, the $\mathcal{H}_{\infty}$- and $\mathcal{H}_2$-based model reduction problems are formulated; and algorithms are proposed to solve them. The effectiveness of the proposed methods is demonstrated in Section \ref{sectionfour} via a numerical example. The paper concludes with Section \ref{sectionfive}.
\section{PRELIMINARIES AND PROBLEM SETUP}\label{sectiontwo}
This section presents the notation, preliminaries, and problem setup of the paper.
\subsection{Notation}\label{twoa}
The cardinality of a set $\mathcal{M}$ is denoted by $|\mathcal{M}|$. $\mathbb{R}^n$ and $\mathbb{R}^n_{++}$ denote the sets of $n$-dimensional vectors with real and positive real entries, respectively. We drop the superscripts for $n = 1$. $\mathbb{R}^{n\times m}$ denotes the set of $n\times m$ matrices with entries in $\mathbb{R}$. $\mathbb{S}^{n}$, $\mathbb{S}_{+}^{n}$, and $\mathbb{S}_{++}^{n}$ denote the sets of $n\times n$ symmetric, positive semi-definite, and positive definite matrices, respectively. $\mathbb{D}_{++}^{n}$ denotes the set of diagonal matrices in $\mathbb{S}_{++}^{n}$.  $\mathbf{0}$ denotes a matrix of zeros of appropriate dimensions. $\mathbf{1}_n$ and $I_n$ denote the vector of ones of length $n$ and the identity matrix of size $n\times n$, respectively. For a given vector $v \in \mathbb{R}^{n}$, $\mathbf{diag}(v)$ denotes a diagonal matrix with the elements of $v$ on its diagonal. For a given matrix $V \in \mathbb{R}^{n\times m}$, $V^T$ and $\mathbf{ker}(V)$ denote the transpose and nullspace of $V$, respectively. If $n>m$ and $V$ is full column rank, then $V^L$ denotes a left inverse of $V$, i.e., $V^LV=I_m$. If $n = m$, then $\mathbf{tr}(V)$ denotes the trace of $V$. If $n=m$ and $V$ is nonsingular, then $V^{-1}$ denotes the inverse of $V$. For $n=m$, we write $V\succeq \mathbf{0}$ and $V\succ\mathbf{0}$ to mean that $V\in\mathbb{S}_{+}^{n}$ and $V\in\mathbb{S}_{++}^{n}$, respectively. Given matrices $A$ in $\mathbb{R}^{n\times n}$, $B$ in $\mathbb{R}^{n\times p}$, and $C$ in $\mathbb{R}^{q\times n}$, the triple $(A,B,C)$ denotes a state-space representation of a strictly proper, continuous-time, linear time-invariant (LTI) system $\mathbf{\Sigma}$ of order $n$. The transfer function matrix of $\mathbf{\Sigma}$ is given by $\mathbf{\Sigma}(s)=C(sI_{n}-A)^{-1}B$, where $s$ denotes the Laplace variable. The system $\mathbf{\Sigma}$ is \textit{bounded input-bounded output (BIBO)} stable if and only if the poles of $\mathbf{\Sigma}$ have negative real parts. If $\mathbf{\Sigma}$ is BIBO stable, then $\|\mathbf{\Sigma}\|_{\Hinf}$ and $\|\mathbf{\Sigma}\|_{\mathcal{H}_2}$ denote the (finite) $\Hinf$-norm and $\mathcal{H}_2$-norm of $\mathbf{\Sigma}$, respectively.
\subsection{Graph Theory}\label{twob}
$\mathcal{G}(\mathcal{V},\mathcal{E},E,W)$ denotes a simple, undirected, weighted, and connected graph $\mathcal{G}$ with a set of nodes $\mathcal{V} = \{1,2,\ldots,n\}$ and a set of edges $\mathcal{E} \subseteq \mathcal{V} \times \mathcal{V}$. The elements of $\mathcal{E}$ consist of uniquely labeled, unordered pairs $(i,j)$ that indicate the existence of an edge with label $l$ between nodes $i$ and $j$, where $i$, $j\in\mathcal{V}$ and the label $l$ is in $\{1,2,\ldots,|\mathcal{E}|\}$. The nodal time-scales matrix $E \in \mathbb{D}_{++}^n$ is defined as $E:=\mathbf{diag}((e_1,e_2,\ldots,e_n))$, where $e_i$ denotes the time-scale associated with node $i\in\mathcal{V}$. Likewise, the edge weights matrix $W \in \mathbb{D}_{++}^{|\mathcal{E}|}$ is defined as $W:=\mathbf{diag}((w_1,w_2,\ldots,w_{|\mathcal{E}|}))$, where $w_l$ denotes the edge weight associated with edge $l\in\{1,2,\ldots,|\mathcal{E}|\}$. The incidence matrix $D \in \mathbb{R}^{n\times |\mathcal{E}|}$ characterizes the incidence relation between the nodes and edges of $\mathcal{G}$. $D$ is defined such that $[D]_{il} = +1$ if edge $l$ is directed from node $i$, $-1$ if edge $l$ is directed towards node $i$, and $0$ otherwise. For undirected graphs, the incidence matrix is generated by assigning arbitrary orientations to the edges of $\mathcal{G}$.
The (unscaled) Laplacian matrix $L$ of $\mathcal{G}$ is defined as $L := DWD^T\in \mathbb{S}_+^n$ and satisfies $\mathbf{ker}(L)=\mathbf{span}\{\mathbf{1_n}\}$, $L\mathbf{1}_n=0$, and $\mathbf{1}_n^TL=0$, where $\mathbf{span}\{v\}$ denotes the set of vectors of the form $\alpha v$ with $\alpha \in \mathbb{R}$ \cite{doi:10.1146/annurev-control-061820-083817}. Also, for all $i$, $j\in\mathcal{V}$, $[L]_{ij}\leq0$ if $i\neq j$ and $[L]_{ii}>0$ otherwise.

A graph clustering of $\mathcal{G}$ is a partition of $\mathcal{V}$ into $r$ nonempty, mutually exclusive subsets or ``clusters" $\mathcal{C}_1, \mathcal{C}_2,\ldots,\mathcal{C}_r$ such that $\mathcal{C}_1 \cup \mathcal{C}_2 \cup \cdots \cup \mathcal{C}_r = \mathcal{V}$. The graph clustering can be characterized by a binary matrix $\Pi \in \mathbb{R}^{n\times r}$ such that $[\Pi]_{ij} = 1$ if $i \in \mathcal{C}_j$ and $0$ otherwise. By definition, it follows that $\Pi\mathbf{1}_r=\mathbf{1}_n$. A useful property of the clustering matrix $\Pi$ is that the matrix $\hat{L}=\hat{D}\hat{W}\hat{D}^{T}\in\mathbb{S}_+^r$, where $\hat{D}$ is constructed using $D$ and $\Pi$, can be interpreted as a Laplacian matrix that characterizes a simple, undirected, weighted, and connected \textit{reduced} graph $\hat{\mathcal{G}}(\hat{\mathcal{V}},\hat{\mathcal{E}},\hat{E},\hat{W})$ that has \textit{fewer} nodes and edges than $\mathcal{G}$ \cite{doi:10.1146/annurev-control-061820-083817}. In particular, $\hat{\mathcal{V}}$ is the reduced set of nodes such that $|\hat{V}|=r$, where $r\leq n$ by definition, and the matrices $\hat{E}\in\mathbb{D}_{++}^{r}$, $\hat{W}\in\mathbb{D}_{++}^{|\hat{\mathcal{E}}|}$, and $\hat{D}\in\mathbb{R}^{r\times|\hat{\mathcal{E}}|}$ are the reduced nodal time-scales, reduced edge weights, and reduced incidence matrices of $\hat{\mathcal{G}}$, respectively. The matrices $\hat{E}$ and $\hat{W}$ can be  assigned independently of the choice of the clustering matrix $\Pi$, which is a feature exploited in the present work for optimizing the reduced order network system. The reduced graph $\hat{\mathcal{G}}$ is obtained from $\mathcal{G}$ as follows. First, the edges between the nodes within the same cluster are removed and the nodes within each cluster are aggregated into a single node. Second, if there is at least one edge between any pairs of nodes in different clusters, then a single edge between the corresponding clusters is retained. Otherwise, no edge exists between the two clusters. $\hat{D}$ is algebraically obtained by removing all duplicate and zero columns from $\Pi^TD$.
\subsection{Problem Setup}\label{twoc}
Consider a consensus network system $\mathbf{\Sigma}$ consisting of time-scaled single integrator agents with weighted interconnections evolving over a network described by a simple, undirected, weighted, and connected graph $\mathcal{G}(\mathcal{V},\mathcal{E},E,W)$. The agents and the interconnections between them are represented by the nodes and edges of $\mathcal{G}$, respectively. The dynamics of this network system are given by
\begin{equation}
    \mathbf{\Sigma}:
    \left\{
    \begin{array}{rl}
        E\dot{x}(t)\!\!\!\!&=-Lx(t) + Fu(t),\\
        y(t)\!\!\!\!&=Hx(t),\label{originalsystem}
    \end{array}
    \right.
\end{equation}
where $x(t) \in \mathbb{R}^n$, $u(t) \in \mathbb{R}^p$, and $y(t) \in \mathbb{R}^q$ are the state, input, and  output vectors at time $t\geq0$, respectively. $L=DWD^{T}\in\mathbb{S}_{+}^{n}$ and $D\in\mathbb{R}^{n \times|\mathcal{E}|}$ are the Laplacian and incidence matrices of $\mathcal{G}$. $F \in \mathbb{R}^{n\times p}$ and $H \in \mathbb{R}^{q\times n}$ are the input and output matrices of $\mathbf{\Sigma}$, respectively.

Given a clustering of the graph $\mathcal{G}$ with corresponding characteristic clustering matrix $\Pi\in\mathbb{R}^{n\times r}$, the problem addressed in this paper is to find a reduced consensus network system $\mathbf{\hat{\Sigma}}$ that 1) consists of $r$ single integrator agents whose dynamics evolve over the network described by the reduced, simple, undirected, weighted, and connected graph $\hat{\mathcal{G}}(\hat{\mathcal{V}},\hat{\mathcal{E}},\hat{E},\hat{W})$ and 2) approximates the input-output behavior of the original network system $\mathbf{\Sigma}$. The dynamics of the reduced network system are given by
\begin{equation}\label{reducedsystem}
    \mathbf{\hat{\Sigma}}:
    \left\{
    \begin{array}{rl}
        \hat{E}\dot{\hat{x}}(t)\!\!\!\!&=-\hat{L}\hat{x}(t) + \beta\hat{F}u(t),\\
        \hat{y}(t)\!\!\!\!&=\alpha\hat{H}\hat{x}(t),
    \end{array}
    \right.
\end{equation}
where $\hat{x}(t) \in \mathbb{R}^r$ and $\hat{y}(t) \in \mathbb{R}^q$  are the state and output vectors of the reduced system, respectively. $\hat{L}=\hat{D}\hat{W}\hat{D}^{T}\in\mathbb{S}_{+}^{r}$ and $\hat{D}\in\mathbb{R}^{r \times|\hat{\mathcal{E}}|}$ are the Laplacian and incidence matrices of the reduced graph $\hat{\mathcal{G}}$. As mentioned in Section \ref{twob}, $\hat{D}$ is obtained by removing the duplicate and zero columns from $\Pi^TD$. $\hat{F}=\Pi^{T}F \in \mathbb{R}^{r\times p}$ and $\hat{H}=H\Pi \in \mathbb{R}^{q\times r}$ are the reduced system input and output matrices, respectively. The reduced order system $\mathbf{\hat{\Sigma}}$ defined in \eqref{reducedsystem} is parameterized by the reduced edge weight matrix $\hat{W}$ and the reduced nodal time-scale matrix $\hat{E}$. The $\Hinf$-norm and the $\mathcal{H}_2$-norm of the error system $\mathbf{\Sigma}-\mathbf{\hat{\Sigma}}$ will be used to quantify the reduction error. Specifically, the diagonal entries in the matrices $\hat{W}$ and $\hat{E}$ are unknown parameters that will be chosen to minimize the approximation errors ${\|\mathbf{\Sigma} - \mathbf{\hat{\Sigma}}\|}_{\mathcal{H}_\infty}$ and ${\|\mathbf{\Sigma} - \mathbf{\hat{\Sigma}}\|}_{\mathcal{H}_2}$. Finally, $\alpha$ and $\beta$ are scalars that will be chosen appropriately to ensure that the error system $\mathbf{\Sigma}-\mathbf{\hat{\Sigma}}$ is BIBO stable for all $\hat{W}\in\mathbb{D}_{++}^{|\hat{\mathcal{E}}|}$ and $\hat{E}\in\mathbb{D}_{++}^{r}$. We formalize the problems addressed in this paper as follows.

Given the network system $\mathbf{\Sigma}$ defined in \eqref{originalsystem} and the parameterized reduced network system $\hat{\mathbf{\Sigma}}$ defined in \eqref{reducedsystem} for a predetermined clustering matrix $\Pi\in\mathbb{R}^{n\times r}$, solve the following optimization problems:
\begin{equation}\label{problemone}
    \begin{array}{cc}
        \displaystyle\min_{\hat{E}\in\mathbb{D}_{++}^{r},\hat{W}\in\mathbb{D}_{++}^{|\hat{\mathcal{E}}|}} & \|\mathbf{\Sigma} - \mathbf{\hat{\Sigma}}\|_{\Hinf},
    \end{array}
\end{equation}
\begin{equation}\label{problemtwo}
    \begin{array}{cc}
        \displaystyle\min_{\hat{E}\in\mathbb{D}_{++}^{r},\hat{W}\in\mathbb{D}_{++}^{|\hat{\mathcal{E}}|}} & \|\mathbf{\Sigma} - \mathbf{\hat{\Sigma}}\|_{\mathcal{H}_2}.
    \end{array}
\end{equation}
Although the PGP paradigm is applied in most clustering-based model reduction techniques, applying this paradigm limits the flexibility of constructing a more accurate reduced model once a clustering has been chosen \cite{8795611}. Namely, in this paradigm, the reduced system matrices are directly determined via projection using the characteristic clustering matrix $\Pi$, i.e., $\hat{E}=\Pi^TE\Pi$ and $\hat{L}=\Pi^TL\Pi$. In \cite{8795611}, Problem \eqref{problemtwo} is solved for $\textit{fixed}$ $\hat{E}$, with $\hat{E}=\Pi^TE\Pi$, whereby the $\H2$-norm of the error system is minimized by optimizing over the matrix $\hat{W}$. The contribution of this paper is thus to address both Problems \eqref{problemone} and \eqref{problemtwo}, wherein both $\hat{E}$ and $\hat{W}$ are free parameter matrices to be optimized. The setup in this paper is limited to undirected graphs; and future work will look at treating both problems for the case of directed graphs, see also the related work in \cite{cheng2020reduced}.
\section{MAIN RESULTS}\label{sectionthree}
This section presents the main results of the paper. In Section \ref{threea}, an appropriate choice of the scalars $\alpha$ and $\beta$ in \eqref{reducedsystem} is found that ensures that the error system $\mathbf{\Sigma}-\mathbf{\hat{\Sigma}}$ is BIBO stable for all $\hat{W}\in\mathbb{D}_{++}^{|\hat{\mathcal{E}}|}$ and $\hat{E}\in\mathbb{D}_{++}^{r}$. In Section \ref{threeb} and Section \ref{threec}, $\Hinf$- and $\mathcal{H}_2$-based optimization algorithms are proposed to search for suboptimal edge weights and nodal time-scales for $\mathbf{\hat{\Sigma}}$ that minimize ${\|\mathbf{\Sigma} - \mathbf{\hat{\Sigma}}\|}_{\mathcal{H}_\infty}$ and ${\|\mathbf{\Sigma} - \mathbf{\hat{\Sigma}}\|}_{\mathcal{H}_2}$, respectively.
\subsection[BIBO Stability of the Error System]{BIBO Stability of $\mathbf{\Sigma} - \mathbf{\hat{\Sigma}}$}\label{threea}
Consider the error system $\mathbf{\Sigma_e}=\mathbf{\Sigma-\mathbf{\hat{\Sigma}}}$. A realization for $\mathbf{\Sigma_e}$ is given by the triple $(A_e,B_e,C_e)$, where
\begin{align*}
A_e&=\begin{bmatrix}
-E^{-1}L & \mathbf{0}\\
\mathbf{0} & -\hat{E}^{-1}\hat{L}
\end{bmatrix},~B_e=\begin{bmatrix}E^{-1}F\\\beta\hat{E}^{-1}\hat{F}\end{bmatrix},\\C_e&=\begin{bmatrix}
H & -\alpha\hat{H}
\end{bmatrix}.
\end{align*}
The corresponding $q\times p$ transfer function matrix is given by $\mathbf{\Sigma_e}(s)=C_e(sI_{(n+r)}-A_e)^{-1}B_e$. By construction, and from the properties of the weighted and scaled Laplacian $E^{-1}L$ of $\mathcal{G}$ and the reduced weighted and scaled Laplacian $\hat{E}^{-1}\hat{L}$ of $\hat{\mathcal{G}}$ derived in \cite{Farhat_2021}, the spectrum of $A_e$ contains two zero eigenvalues, with the remaining eigenvalues being negative real. The following lemma provides a condition on $\alpha$ and $\beta$ to ensure that the error system $\mathbf{\Sigma_e}$ is BIBO stable, thereby guaranteeing that the approximation errors $\|\mathbf{\Sigma_e}\|_{\Hinf}$ and $\|\mathbf{\Sigma_e}\|_{\mathcal{H}_2}$ are bounded for all $\hat{W}\in\mathbb{D}_{++}^{|\hat{\mathcal{E}}|}$ and $\hat{E}\in\mathbb{D}_{++}^r$.
\begin{lemma}\label{lemmaone}
Consider the network system $\mathbf{\Sigma}$ defined in \eqref{originalsystem}  and the reduced network system $\mathbf{\hat{\Sigma}}$ defined in \eqref{reducedsystem} for a given clustering matrix $\Pi\in\mathbb{R}^{n\times r}$.
If $ \alpha\beta = \mathbf{tr}(\hat{E})/\mathbf{tr}(E)$, then $\mathbf{\Sigma_e}=\mathbf{\Sigma}-\mathbf{\hat{\Sigma}}$ is BIBO stable for all $\hat{W}\in \mathbb{D}_{++}^{|\hat{\mathcal{E}}|}$ and $\hat{E} \in \mathbb{D}_{++}^{r}$.
\end{lemma}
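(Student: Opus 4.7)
The plan is to establish BIBO stability by showing that the two zero eigenvalues of $A_e$ do not manifest as poles of the transfer function $\mathbf{\Sigma_e}(s)$ under the stated condition. Since $A_e$ is block-diagonal and, by the cited spectral properties of the scaled Laplacians $E^{-1}L$ and $\hat{E}^{-1}\hat{L}$, all non-zero eigenvalues already have negative real parts, the only obstruction to BIBO stability is whether the two zero modes yield a pole at $s=0$ in $\mathbf{\Sigma_e}(s) = \mathbf{\Sigma}(s) - \hat{\mathbf{\Sigma}}(s)$. I would therefore exploit the block-diagonal structure and compute the residue of $\mathbf{\Sigma_e}(s)$ at $s=0$, then show that the choice $\alpha\beta = \mathbf{tr}(\hat{E})/\mathbf{tr}(E)$ is exactly what forces this residue to vanish.

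The first step is to identify the right and left null spaces of the (generally non-symmetric) matrix $E^{-1}L$. Since $L\mathbf{1}_n=0$, one has $E^{-1}L\mathbf{1}_n=0$, and since $\mathbf{1}_n^T L = 0$, one has $(\mathbf{1}_n^T E)(E^{-1}L) = 0$. Hence the rank-one spectral projector associated with the zero eigenvalue of $E^{-1}L$ is
\begin{equation*}
P_0 \;=\; \frac{\mathbf{1}_n(\mathbf{1}_n^T E)}{\mathbf{1}_n^T E\,\mathbf{1}_n} \;=\; \frac{\mathbf{1}_n\mathbf{1}_n^T E}{\mathbf{tr}(E)},
\end{equation*}
and analogously for the reduced system, $\hat P_0 = \mathbf{1}_r\mathbf{1}_r^T\hat{E}/\mathbf{tr}(\hat{E})$. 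A Laurent expansion near $s=0$ then gives $(sI_n + E^{-1}L)^{-1} = s^{-1} P_0 + O(1)$, and similarly for the reduced resolvent.

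Applying these expansions to $\mathbf{\Sigma_e}(s)$ and collecting the $s^{-1}$ terms yields the residue
\begin{equation*}
R_0 \;=\; H P_0 E^{-1} F \;-\; \alpha\hat{H}\hat{P}_0\,\beta\hat{E}^{-1}\hat{F} \;=\; \frac{H\mathbf{1}_n\mathbf{1}_n^T F}{\mathbf{tr}(E)} \;-\; \alpha\beta\,\frac{\hat{H}\mathbf{1}_r\mathbf{1}_r^T\hat{F}}{\mathbf{tr}(\hat{E})}.
\end{equation*}
Using $\Pi\mathbf{1}_r=\mathbf{1}_n$ together with $\hat{H}=H\Pi$ and $\hat{F}=\Pi^T F$ gives $\hat{H}\mathbf{1}_r = H\mathbf{1}_n$ and $\mathbf{1}_r^T\hat{F}=\mathbf{1}_n^T F$. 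Substituting these into $R_0$ factors out $H\mathbf{1}_n\mathbf{1}_n^T F$, and $R_0=0$ reduces precisely to the stated identity $\alpha\beta = \mathbf{tr}(\hat{E})/\mathbf{tr}(E)$. Since the residue vanishes, $s=0$ is a removable singularity of $\mathbf{\Sigma_e}(s)$ rather than a pole; all remaining poles inherit negative real parts from the nonzero eigenvalues of $-E^{-1}L$ and $-\hat{E}^{-1}\hat{L}$, so $\mathbf{\Sigma_e}$ is BIBO stable.

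The main obstacle I expect is being careful with the spectral projector of the \emph{non-symmetric} matrix $E^{-1}L$: the naive guess that the left and right null vectors coincide fails, and one must recognize that the correct left null vector is $\mathbf{1}_n^T E$ (and $\mathbf{1}_r^T\hat{E}$ on the reduced side), from which the appearance of $\mathbf{tr}(E)$ and $\mathbf{tr}(\hat{E})$ in the residue is inevitable. Once this is in place, the cancellation is a direct consequence of $\Pi\mathbf{1}_r=\mathbf{1}_n$ and the definitions of $\hat{F}$ and $\hat{H}$.
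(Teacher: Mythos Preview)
Your argument is correct and arrives at exactly the same residue expression as the paper, namely
\[
\frac{1}{\mathbf{tr}(E)}H\mathbf{1}_n\mathbf{1}_n^{T}F - \frac{\alpha\beta}{\mathbf{tr}(\hat{E})}\hat{H}\mathbf{1}_r\mathbf{1}_r^{T}\hat{F},
\]
and the cancellation via $\Pi\mathbf{1}_r=\mathbf{1}_n$, $\hat{H}=H\Pi$, $\hat{F}=\Pi^TF$ is identical. The difference is in how the zero mode is isolated. The paper builds an explicit similarity transformation $T$ (using the matrices $S_n$, $S_n^{L}$, $S_r$, $S_r^{L}$) that block-diagonalizes $A_e$ into a $2\times2$ zero block and a Hurwitz block $\bar{A}_e$, then reads off the $1/s$ coefficient from the transformed realization. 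You instead go directly through the rank-one spectral projector $P_0=\mathbf{1}_n\mathbf{1}_n^{T}E/\mathbf{tr}(E)$ and the Laurent expansion of the resolvent, which is shorter and avoids introducing $S_n$, $S_n^{L}$ altogether. What the paper's more constructive route buys is a concrete minimal-like realization $(\bar{A}_e,\bar{B}_e,\bar{C}_e)$ of the stable part $\bar{\mathbf{\Sigma}}_e$; those matrices (and $S_n^{L}$, $S_r^{L}$) are reused verbatim in the proofs of Theorems~\ref{theoremone} and~\ref{theoremthree}, so within the paper the explicit transformation is not wasted effort. For the lemma in isolation, your projector/residue argument is the cleaner proof.
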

\begin{proof}
A similar realization $(T^{-1}A_eT,T^{-1}B_e,C_eT)$ for the error system $\mathbf{\Sigma_e}$ is first found that isolates the zero eigenvalues of $A_e$ from the remaining ones. Then, it is shown that the poles corresponding to the zero eigenvalues of $A_e$ cancel out in the matrix transfer function of the error system when $\alpha\beta = \mathbf{tr}(\hat{E})/\mathbf{tr}(E)$. As a result, in this case, the matrix transfer function of the error system contains poles lying exclusively in the left-half of the $s$-plane for all $\hat{W}\in \mathbb{D}_{++}^{|\hat{\mathcal{E}}|}$ and $\hat{E} \in \mathbb{D}_{++}^{r} $, which ensures the BIBO stability of the error system.
The sought nonsingular similarity transformation matrix $T$ and its inverse are constructed as follows:
   \begin{align*}
        T &= \begin{bmatrix}
   \frac{1}{\mathbf{tr}(E)}\mathbf{1}_n & \mathbf{0} & E^{-1}S_n & \mathbf{0}\\
   \mathbf{0} & \frac{1}{\mathbf{tr}(\hat{E})}\mathbf{1}_r & \mathbf{0} & \hat{E}^{-1}S_r
   \end{bmatrix},\\
         T^{-1} &= \begin{bmatrix}
   \mathbf{1}_n^{T}E & \mathbf{0}\\
   \mathbf{0} & \mathbf{1}_r^{T}\hat{E}\\
   S_n^{L}E & \mathbf{0}\\
   \mathbf{0} & S_r^{L}\hat{E}
   \end{bmatrix},
   \end{align*}
   where the matrices
      \begin{align}
      S_n = \begin{bmatrix}-I_{n-1}\\\mathbf{1}_{n-1}^{T}\end{bmatrix},~
      S_n^{L} = (S_n^{T}E^{-1}S_n)^{-1}S_n^{T}E^{-1}\label{SnL},
\end{align}
satisfy $\mathbf{1}_n^TS_n = \mathbf{0}$ and $S_n^LS_n = I_{n-1}$, and the matrices $S_r$ and $S_r^L$ are defined similarly. The similarity transformation $T$ is now applied. The resulting expression for ${\mathbf{\Sigma_e}}(s)$ obtained from the realization $(T^{-1}A_eT,T^{-1}B_e,C_eT)$ is given by
\begin{align*}
        &{\mathbf{\Sigma_e}}(s)= C_eT\left(sI_{(n+r)}-T^{-1}A_eT\right)^{-1}T^{-1}B_e\\
              &= \begin{bmatrix}
   \tilde{C}_e & \bar{C}_e
\end{bmatrix}\begin{bmatrix}
                    (sI_{2})^{-1} & \mathbf{0}\\
                    \mathbf{0} & (sI_{m}-\bar{A}_e)^{-1}
                 \end{bmatrix}\begin{bmatrix}
   \tilde{B}_e \\ \bar{B}_e
\end{bmatrix}\\&= \frac{1}{s}\tilde{C}_e\tilde{B}_e + \bar{C}_e(sI_{m}-\bar{A}_e)^{-1}\bar{B}_e \\
&= \frac{1}{s}\left(\frac{1}{\mathbf{tr}(E)}H\mathbf{1}_n\mathbf{1}_n^{T}F - \frac{\alpha\beta}{\mathbf{tr}(\hat{E})}\hat{H}\mathbf{1}_r\mathbf{1}_r^{T}\hat{F}\right) + {\mathbf{\bar{\Sigma}_e}}(s),
\end{align*}
where ${\mathbf{\bar{\Sigma}_e}}(s)=\bar{C}_e(sI_{m}-\bar{A}_e)^{-1}\bar{B}_e$,
\begin{align*}
    \bar{A}_e&=\begin{bmatrix}
      -S_n^{L}LE^{-1}S_n  & \mathbf{0}\\
      \mathbf{0}  & -S_r^{L}\hat{L}\hat{E}^{-1}S_r\label{ae}
    \end{bmatrix},\\
      \bar{B}_e &= \begin{bmatrix}
          S_n^{L}F \\
\beta S_r^{L}\hat{F}
      \end{bmatrix},~\bar{C}_e = \begin{bmatrix}
          HE^{-1}S_n & -\alpha\hat{H}\hat{E}^{-1}S_r%\label{bece}
      \end{bmatrix},\\
        \tilde{B}_e &= \begin{bmatrix}
         \mathbf{1}_n^{T}F \\
\beta\mathbf{1}_r^{T}\hat{F}
      \end{bmatrix},~\tilde{C}_e = \begin{bmatrix}
     \frac{1}{\mathbf{tr}(E)}H\mathbf{1}_n & -\frac{\alpha}{\mathbf{tr}(\hat{E})}\hat{H}\mathbf{1}_r
      \end{bmatrix},
\end{align*}
and $m = n+r-2$. Thus, the applied similarity transformation isolates the zero eigenvalues of $A_e$, which means that $\bar{A}_e$ is Hurwitz and $\mathbf{\bar{\Sigma}_{e}}$ is BIBO stable. From the properties of $\Pi$ and the definitions of $\hat{F}$ and $\hat{H}$, it follows that $H\mathbf{1}_n\mathbf{1}_n^{T}F = \hat{H}\mathbf{1}_r\mathbf{1}_r^{T}\hat{F}$, and so, $\mathbf{\Sigma_e}$ is BIBO stable for $\alpha\beta = \mathbf{tr}(\hat{E})/\mathbf{tr}(E)$, which concludes the proof.\end{proof}
In \cite{8795611}, an analogous procedure to the one followed in the proof of Lemma \ref{lemmaone} is performed for a reduced network system with \textit{fixed} time-scales, and the resulting condition is $\alpha\beta = 1$. As such, the condition therein to guarantee the BIBO stability of the corresponding error system is a special case of our result in Lemma \ref{lemmaone}. Namely, if the reduced time-scale matrix is set to $\hat{E} = \Pi^TE\Pi$ as in \cite{8795611}, we get
\begin{align*}
    \alpha\beta = \frac{\mathbf{tr}(\hat{E})}{\mathbf{tr}(E)} = \frac{\mathbf{tr}(\Pi^TE\Pi)}{\mathbf{tr}(E)} = 1.
\end{align*}
Lemma \ref{lemmaone} further guarantees that $\|\mathbf{\Sigma-\mathbf{\hat{\Sigma}}}\|_{\Hinf}$ and $\|\mathbf{\Sigma-\mathbf{\hat{\Sigma}}}\|_{\H2}$ are invariant under the specific choices of $\alpha$ and $\beta$ as long as $\alpha\beta=\mathbf{tr}(\hat{E})/\mathbf{tr}(E)$. For the remainder of the paper, we make the convenient choice of $\alpha=1$ and $\beta=\mathbf{tr}(\hat{E})/\mathbf{tr}(E)$.

\subsection[H-infinity Optimization of Error System]{$\|\mathbf{\Sigma}-\mathbf{\hat{\Sigma}}\|_{\Hinf}$ Optimization}\label{threeb}
To solve Problem \eqref{problemone}, we first obtain a characterization of $\| \mathbf{\Sigma_e}\|_{\Hinf}$ in terms of $\hat{W}$ and $\hat{E}$.
\begin{theorem}\label{theoremone}
    Consider the network system $\mathbf{\Sigma}$ defined in \eqref{originalsystem}. There exists a reduced network system $\mathbf{\hat{\Sigma}}$ defined as in \eqref{reducedsystem} such that $\|\mathbf{\Sigma_e}\|_{\Hinf} < \gamma$ if and only if there exist matrices $\hat{X} \in  \mathbb{S}_{++}^{m}$, $\hat{W} \in \mathbb{D}_{++}^{|\hat{\mathcal{E}}|}$, and $\hat{E} \in \mathbb{D}_{++}^{r}$, and a sufficiently small $\hat{\theta} \in \mathbb{R}_{++}$ such that the following inequality is satisfied:
    \begin{equation}\label{mainconstraint}
       \Phi_{\infty}(\hat{X},\hat{\gamma},\hat{W},\hat{E}) = \psi_{\infty}(\hat{X},\hat{\gamma}) + \phi_{\infty}(\hat{W},\hat{E})\prec \mathbf{0},
    \end{equation}
    where $\hat{\gamma}=\hat{\theta}\gamma$, the matrix-valued mappings $\psi_{\infty} : \mathbb{S}_{++}^{m}\times\mathbb{R}_{++} \rightarrow \mathbb{S}^{2m+p+q}$ and $\phi_{\infty} : \mathbb{D}_{++}^{|\hat{\mathcal{E}}|}\times\mathbb{D}_{++}^{r} \rightarrow \mathbb{S}^{2m+p+q}$ are given by
    \begin{align}
        \psi_{\infty}(\hat{X},\hat{\gamma}) &=\begin{bmatrix}
    \bar{A}_{en}^T\hat{X} + \hat{X}\bar{A}_{en} & \hat{X}\bar{B}_{en} & \hat{\theta}\bar{C}_{en}^T & \hat{X}J\\
    \bar{B}_{en}^T\hat{X} & -\hat{\gamma} I_p & \mathbf{0} & \mathbf{0}\\
    \hat{\theta}\bar{C}_{en} & \mathbf{0} & -\hat{\gamma} I_q & \mathbf{0}\\
    J^T\hat{X} & \mathbf{0} & \mathbf{0} & \mathbf{0}
    \end{bmatrix}\!,\nonumber\\
        \phi_{\infty}(\hat{W},\hat{E}) &=\begin{bmatrix}
    -\bar{A}_{er}^T\bar{A}_{er} & -\bar{A}_{er}^T\bar{B}_{er} & \hat{\theta}\bar{C}_{er}^T & \bar{A}_{er}^T\\
   -\bar{B}_{er}^T\bar{A}_{er} &  -\bar{B}_{er}^T\bar{B}_{er} & \mathbf{0} & \bar{B}_{er}^T\\
    \hat{\theta}\bar{C}_{er} &  \mathbf{0} & \mathbf{0} & \mathbf{0}\\
    \bar{A}_{er} & \bar{B}_{er} & \mathbf{0} & -I_{m}
    \end{bmatrix}\!,  \label{phi}
    \end{align}
    and the matrices $\bar{A}_{en}\in\mathbb{R}^{m\times m}$, $\bar{A}_{er}\in\mathbb{R}^{m\times m}$, $\bar{B}_{en}\in\mathbb{R}^{m\times p}$, $\bar{B}_{er}\in\mathbb{R}^{m\times p}$,
    $\bar{C}_{en}\in\mathbb{R}^{q\times m}$, $\bar{C}_{er}\in\mathbb{R}^{q\times m}$, and $J\in\mathbb{R}^{m\times m}$ are defined in equations \eqref{Aen} through \eqref{matrices}.
\end{theorem}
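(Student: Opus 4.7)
The strategy is to apply the Bounded Real Lemma (BRL) to a BIBO stable realization of the error system produced by Lemma~\ref{lemmaone}, and then to repackage the resulting LMI as the $4\times 4$ block matrix $\Phi_{\infty}=\psi_{\infty}+\phi_{\infty}$ via a Schur complement that isolates the $(\hat W,\hat E)$-dependence into its own block. By Lemma~\ref{lemmaone}, the transfer matrix $\mathbf{\Sigma_e}(s)$ coincides with $\bar{\mathbf{\Sigma}}_{e}(s)=\bar{C}_{e}(sI_{m}-\bar{A}_{e})^{-1}\bar{B}_{e}$, so $\|\mathbf{\Sigma_e}\|_{\Hinf}=\|\bar{\mathbf{\Sigma}}_{e}\|_{\Hinf}$ and $\bar{A}_{e}$ is Hurwitz for all $\hat{W}\in\mathbb{D}_{++}^{|\hat{\mathcal{E}}|}$ and $\hat{E}\in\mathbb{D}_{++}^{r}$. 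Hence the BRL gives that $\|\mathbf{\Sigma_e}\|_{\Hinf}<\gamma$ if and only if there exists $X\succ\mathbf{0}$ with the standard $3\times 3$ BRL matrix in $(X,\gamma,\bar A_e,\bar B_e,\bar C_e)$ being negative definite. Substituting $\hat{X}=\hat\theta X$ and $\hat\gamma=\hat\theta\gamma$ for some $\hat\theta\in\mathbb{R}_{++}$ scales this inequality by $\hat\theta$ and reproduces precisely the ``BRL skeleton'' that appears in the $(1,1),(1,2),(1,3),(2,2),(2,3),(3,3)$ blocks of $\Phi_{\infty}$ once the $(\hat W,\hat E)$-dependent part is stripped out.

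Next, I would split the realization matrices into a nominal part carrying the original-system data (subscript $n$) and a reduced part carrying the $(\hat W,\hat E)$-dependent data (subscript $r$), i.e.\ writing $\bar{A}_{e}=\bar{A}_{en}+J\bar{A}_{er}$, $\bar{B}_{e}=\bar{B}_{en}+J\bar{B}_{er}$, and $\bar{C}_{e}=\bar{C}_{en}+\bar{C}_{er}$, where $J\in\mathbb{R}^{m\times m}$ is the injection that selects the reduced-graph block of the block-diagonal realization in Lemma~\ref{lemmaone}. This separation is natural because, in Lemma~\ref{lemmaone}, $\bar{A}_{e}$, $\bar{B}_{e}$, and $\bar{C}_{e}$ all decompose as a sum of a block depending only on $(L,E,F,H)$ and a block depending on $(\hat{L}\hat{E}^{-1},\hat{F},\hat{H},\hat{E}^{-1})$, so the reduced matrices $\bar{A}_{er},\bar{B}_{er},\bar{C}_{er}$ assemble the $(\hat W,\hat E)$-dependence of the realization.

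The key Schur complement step is then to rewrite the scaled BRL inequality as $\Phi_{\infty}\prec\mathbf{0}$ by adjoining the auxiliary $-I_{m}$ block together with the fourth row/column whose entries are $\hat{X}J+\bar{A}_{er}^{T}$, $\bar{B}_{er}^{T}$, and $\mathbf{0}$. Taking Schur complement with respect to $-I_{m}$ produces exactly the ``missing'' linear pieces $\hat{X}J\bar{A}_{er}+\bar{A}_{er}^{T}J^{T}\hat{X}$ in the $(1,1)$ block (via the cross-product $\hat{X}J$ with $\bar{A}_{er}$) and $\hat{X}J\bar{B}_{er}$ in the $(1,2)$ block, which, combined with $\bar{A}_{en},\bar{B}_{en}$ and the $(3,1),(3,3)$ terms $\hat{\theta}(\bar{C}_{en}+\bar{C}_{er})$, reconstruct the scaled BRL in $(\hat{X},\hat{\gamma},\bar{A}_{e},\bar{B}_{e},\bar{C}_{e})$. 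The quadratic tails $\bar{A}_{er}^{T}\bar{A}_{er}$, $\bar{A}_{er}^{T}\bar{B}_{er}$, and $\bar{B}_{er}^{T}\bar{B}_{er}$ that the Schur completion produces are exactly cancelled by the negative quadratic terms that are deliberately placed in $\phi_{\infty}$, which is why $\phi_{\infty}$ has the negative quadratic block structure it does.

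The main obstacle is bookkeeping the residual cross-term $\hat{X}JJ^{T}\hat{X}$ generated by Schur complement, which is quadratic in $\hat{X}$ and therefore not part of the scaled BRL. This is where the hypothesis that $\hat\theta$ is \emph{sufficiently small} enters: since $\hat X=\hat\theta X$, this residual is of order $\hat\theta^{2}$ and can be absorbed by choosing $\hat\theta$ small enough without losing negative-definiteness, so that $\Phi_{\infty}\prec\mathbf{0}$ becomes equivalent to the scaled BRL inequality. The remaining verification that the assembled block matrix matches $\psi_{\infty}+\phi_{\infty}$ entry-for-entry, together with the identification of $\bar{A}_{en},\bar{A}_{er},\bar{B}_{en},\bar{B}_{er},\bar{C}_{en},\bar{C}_{er},J$ in the form declared in equations \eqref{Aen}--\eqref{matrices}, is then a direct calculation.
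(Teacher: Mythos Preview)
Your proposal is correct and follows essentially the same route as the paper: apply the Bounded Real Lemma to the stable realization $(\bar A_e,\bar B_e,\bar C_e)$ from Lemma~\ref{lemmaone}, split $\bar A_e,\bar B_e,\bar C_e$ into the nominal/reduced pieces via $J$, and use a Schur complement on the $-I_m$ block to pass between the $3\times 3$ BRL inequality and the $4\times 4$ block form $\psi_\infty+\phi_\infty$, with the ``sufficiently small $\hat\theta$'' hypothesis absorbing the quadratic residual $\hat X J J^T\hat X$. The only cosmetic difference is that the paper separates your single Schur step into two: it first augments the BRL with a $(-\theta I_m,\,XJ)$ block (this is where $\theta=1/\hat\theta$ large enters) and then applies an explicit congruence by a unit-lower-triangular matrix $U$ carrying $-\bar A_{er},-\bar B_{er}$ to effect the decoupling exactly, before scaling by $1/\theta$; your direct Schur complement computation collapses these two moves into one.
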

\begin{proof}
Recall the terms introduced in the proof of Lemma \ref{lemmaone}. First, we define the matrix decompositions $\bar{A}_e=\bar{A}_{en}+J\bar{A}_{er}$, $\bar{B}_e = \bar{B}_{en} + J\bar{B}_{er}$, and $\bar{C}_e =\bar{C}_{en}+\bar{C}_{er}$, where
    \begin{align}
        \bar{A}_{en} &= \begin{bmatrix}\label{Aen}
   -S_n^{L}LE^{-1}S_n & \mathbf{0}\\
   \mathbf{0} & \mathbf{0}
    \end{bmatrix},~\bar{B}_{en} = \begin{bmatrix}
S_n^{L}F \\
\mathbf{0}
\end{bmatrix},\\\bar{A}_{er} &= \begin{bmatrix}
         \mathbf{0} & -S_r^{L}\hat{L}\hat{E}^{-1}S_r  \\
         \mathbf{0} & \mathbf{0}
    \end{bmatrix},~\bar{B}_{er} = \begin{bmatrix}
\beta S_r^{L}\hat{F} \\
\mathbf{0}
\end{bmatrix},\\
        \bar{C}_{en} &= \begin{bmatrix}
             HE^{-1}S_n & \mathbf{0}
        \end{bmatrix},\\
        \bar{C}_{er} &= \begin{bmatrix}
    \mathbf{0} & -\alpha\hat{H}\hat{E}^{-1}S_r
    \end{bmatrix},~J =\begin{bmatrix}
    \mathbf{0} & \mathbf{0}\\
    I_{(r-1)} & \mathbf{0}
    \end{bmatrix}.\label{matrices}
   \end{align}
Next, from the Bounded Real Lemma \cite{BEFB:94}, it follows that $\|{\mathbf{\Sigma_e}}\|_{\Hinf} = \|{\mathbf{\bar{\Sigma}_e}}\|_{\Hinf}  < \gamma $ if and only if there exists $X \in \mathbb{S}_{++}^m$ such that \begin{equation}
\begin{bmatrix}\label{BRL}
    \bar{A_e}^TX+X\bar{A_e} & X\bar{B_e} & \bar{C_e}^T\\
     \bar{B_e}^TX & -\gamma I_p & \mathbf{0}\\
    \bar{C_e} & \mathbf{0} & -\gamma I_q
    \end{bmatrix} \prec \mathbf{0}.
\end{equation}
By the Schur complement formula and for a sufficiently large $\theta>0$, \eqref{BRL} is equivalent to the following inequality:
\begin{equation}
\begin{bmatrix}\label{BRLSCHUR}
    \bar{A_e}^TX+X\bar{A_e} & X\bar{B_e} & \bar{C_e}^T & XJ\\
     \bar{B_e}^TX & -\gamma I_p & \mathbf{0} & \mathbf{0}\\
   \bar{C_e} & \mathbf{0} & -\gamma I_q & \mathbf{0}\\
    J^TX & \mathbf{0} & \mathbf{0} & -\theta I_m
    \end{bmatrix} \prec \mathbf{0}.
    \end{equation}
Based on the decompositions of $\bar{A}_e$, $\bar{B}_e$, and $\bar{C}_e$ defined in \eqref{Aen} through \eqref{matrices}, a nonsingular transformation matrix $U$ is applied to \eqref{BRLSCHUR} to decouple the variables $X$ and $\gamma$ from the variables $\hat{W}$ and $\hat{E}$. Namely, let
\begin{equation*}
U = \begin{bmatrix}
I_{m} & \mathbf{0} & \mathbf{0} & \mathbf{0}\\
\mathbf{0} & I_p & \mathbf{0} & \mathbf{0}\\
\mathbf{0} & \mathbf{0} & I_q & \mathbf{0}\\
-\bar{A}_{er} & -\bar{B}_{er} & \mathbf{0} & I_m
\end{bmatrix}.
\end{equation*} Then, pre- and post-multiplying both sides of \eqref{BRLSCHUR} by $U^T$ and $U$, respectively,  dividing both sides by $\theta>0$, and performing the substitutions $\hat{\theta} = 1/\theta$, $\hat{X} = (1/\theta)X \succ \mathbf{0}$, and $\hat{\gamma}=(1/\theta)\gamma$, yield the equivalent matrix inequality in \eqref{mainconstraint}.
\end{proof}

Theorem \ref{theoremone} gives a characterization of the $\Hinf$-norm of the error system $\mathbf{\Sigma_e}$ that will be used to formulate an optimization problem for the selection of the edge weights and nodal time-scales of $\mathbf{\hat{\Sigma}}$. Specifically, we need to find $\hat{X}\in\mathbb{S}_{++}^m$, $\hat{W}\in\mathbb{D}_{++}^{|\hat{\mathcal{E}}|}$, and $\hat{E}\in\mathbb{D}_{++}^r$ that minimize $\hat{\gamma}$ such that \eqref{mainconstraint} holds for a given $\hat{\theta}>0$.

In \cite{cheng2020reduced}, an analogous decoupling procedure to the one used in the proof of Theorem \ref{theoremone} is applied on a matrix inequality (counterpart of (\ref{BRLSCHUR})) that characterizes the $\H2$-norm of the error system. The goal therein is to only decouple $\hat{X}$ from $\hat{W}$ (since $\hat{E}$ is fixed). In that case, the left-hand-side of the resulting inequality takes the form of the difference of two positive semidefinite-convex matrix-valued mappings, also known as a  \textit{psd-convex-concave decomposition} of the counterpart of $\Phi_{\infty}$ therein. Thus, an optimization problem for the selection of the edge weights is formulated therein that can be solved by following the procedure in \cite{convexconcave}. This procedure consists of iteratively solving a sequence of convex optimization problems and is guaranteed to converge to a locally optimal solution of the nonconvex problem of interest. Namely, starting from a feasible solution to the original problem, a convex inequality constraint is obtained by replacing the psd-concave term in the original inequality by its linearization about this initial solution. A convex optimization problem is then solved in which the original inequality constraint is replaced by its convex counterpart. In the second iteration, the psd-concave term in the original inequality is linearized about the solution of the convex optimization problem solved in the first iteration. The process repeats until convergence. As per Lemma~\ref{lemmaone}, all $\hat{W}\in\mathbb{D}_{++}^{|\hat{\mathcal{E}}|}$ are feasible starting points for this iterative algorithm. As a `good' starting point, the solution obtained from applying the PGP paradigm can be used to initialize the algorithm, which ensures that the obtained results are at least as good as the ones from this classic paradigm.

In our case, Lemma \ref{lemmaone} ensures that all $\hat{W}\in\mathbb{D}_{++}^{|\hat{\mathcal{E}}|}$ and $\hat{E}\in\mathbb{D}_{++}^{r}$ are feasible starting points for any similar iterative algorithm to be proposed. However, although $\psi_{\infty}$ is psd-convex in $(\hat{X},\hat{\gamma})$ and $\phi_{\infty}$ is psd-concave in $\hat{W}$, $\phi_{\infty}$ is \textit{not} psd-concave in $\hat{E}$. In what follows, we propose to solve Problem \eqref{problemone} by alternating between optimizing the edge weights for fixed nodal time-scales and optimizing the nodal time-scales for fixed edge weights. In particular, for \textit{fixed $\hat{E}$}, $\phi_{\infty}(\hat{W})$ \textit{is} psd-concave in $\hat{W}$, and so, to optimize the edge weights for fixed time-scales, we follow a procedure similar to the one in \cite{cheng2020reduced}. However, for \textit{fixed $\hat{W}$}, $\phi_{\infty}(\hat{E})$ \textit{is not} psd-concave in $\hat{E}$. Nonetheless, constraint \eqref{mainconstraint} in this case can still be modified and rewritten in a form amenable to solution by the procedure in \cite{convexconcave}. Namely, we treat $\hat{E}^{-1}$ as the variable instead of $\hat{E}$ and introduce an additional matrix-valued variable $Z$ such that $Z=\hat{E}$ to rewrite $\bar{B}_{er}$ as a function of $Z$ (not of $\hat{E}^{-1}$). By doing so, inequality \eqref{mainconstraint} is transformed into the following set of constraints:
\begin{align}\label{eq26}
\psi_{\infty}(\hat{X},\hat{\gamma}) + \phi_{\infty}(\hat{E}^{-1},Z) &\prec \mathbf{0},\\
\begin{bmatrix}\label{eq27}
\hat{E}^{-1} & I_{r}\\
I_{r} & Z
\end{bmatrix} &\succeq \mathbf{0},\\Z-(\hat{E}^{-1})^{-1}&\preceq \mathbf{0}\label{eq28}.
\end{align}
In other words, the added equality constraint $Z=\hat{E}$ is split into $\hat{E} \preceq Z$ and $Z\preceq \hat{E}$.
By application of the Schur complement formula, the inequality $\hat{E} \preceq Z$ is transformed into the linear matrix inequality in $(\hat{E}^{-1},Z)$ in \eqref{eq27}. The inequality $Z\preceq \hat{E}$ is written as $Z-(\hat{E}^{-1})^{-1}\preceq \mathbf{0}$ as in \eqref{eq28}, where the left-hand-side takes the form of the difference of two psd-convex matrix-valued mappings in $Z$ and $\hat{E}^{-1}$, namely, $Z$ and $(\hat{E}^{-1})^{-1}$. Thus, by introducing the additional variable $Z$ and expressing the constraint \eqref{mainconstraint} as the equivalent set of constraints \eqref{eq26}-\eqref{eq28}, we obtain the following result.
\begin{lemma}\label{lemmatwo}
For any fixed $\hat{W}\in\mathbb{D}_{++}^{|\hat{\mathcal{E}}|}$, the matrix-valued mapping $\phi_{\infty}$ in \eqref{eq26} is psd-concave in $(\hat{E}^{-1},Z)$.
\end{lemma}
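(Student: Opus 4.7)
The plan is to exhibit $\phi_{\infty}$, viewed as a function of $(\hat{E}^{-1},Z)$ with $\hat{W}$ held fixed, as the sum of a negative quadratic matrix form and an affine term, from which psd-concavity follows from a standard composition argument.

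First, I would inspect how each of $\bar{A}_{er}$, $\bar{B}_{er}$, and $\bar{C}_{er}$ depends on the two variables. With $\hat{L}=\hat{D}\hat{W}\hat{D}^{T}$ fixed, the expressions in \eqref{Aen}--\eqref{matrices} show that $\bar{A}_{er}$ and $\bar{C}_{er}$ are linear in $\hat{E}^{-1}$. The matrix $\bar{B}_{er}=\beta S_{r}^{L}\hat{F}$ depends on $\hat{E}$ through $\beta=\mathbf{tr}(\hat{E})/\mathbf{tr}(E)$; this is exactly why the auxiliary variable $Z=\hat{E}$ was introduced, so that $\bar{B}_{er}=(\mathbf{tr}(Z)/\mathbf{tr}(E))\,S_{r}^{L}\hat{F}$ becomes linear in $Z$ rather than a nonlinear function of $\hat{E}^{-1}$.

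Second, I would define the (jointly) linear matrix-valued mapping
\begin{equation*}
M(\hat{E}^{-1},Z) := \begin{bmatrix} \bar{A}_{er} & \bar{B}_{er} & \mathbf{0} & \mathbf{0} \end{bmatrix},
\end{equation*}
and verify by direct block multiplication that $-M^{T}M$ reproduces exactly the four quadratic blocks $-\bar{A}_{er}^{T}\bar{A}_{er}$, $-\bar{A}_{er}^{T}\bar{B}_{er}$, $-\bar{B}_{er}^{T}\bar{A}_{er}$, $-\bar{B}_{er}^{T}\bar{B}_{er}$ that occupy the upper-left $(m+p)\times(m+p)$ corner of $\phi_{\infty}$, with all other blocks zero. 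The remainder $L := \phi_{\infty}-(-M^{T}M)$ then collects only the single-factor terms $\pm\bar{A}_{er}$, $\pm\bar{B}_{er}$, $\pm\hat{\theta}\bar{C}_{er}$ and the constant $-I_{m}$, and is therefore affine in $(\hat{E}^{-1},Z)$.

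Third, I would invoke the standard fact that the map $A\mapsto A^{T}A$ is psd-convex on rectangular matrices; composing it with the affine map $M$ yields that $(\hat{E}^{-1},Z)\mapsto M^{T}M$ is psd-convex, so $-M^{T}M$ is psd-concave. Since affine matrix-valued maps are simultaneously psd-convex and psd-concave, $L$ is in particular psd-concave. Adding the two gives $\phi_{\infty}=-M^{T}M+L$, a sum of psd-concave matrix-valued mappings, which is itself psd-concave in $(\hat{E}^{-1},Z)$.

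The main step, rather than a genuine obstacle, is recognizing the right stacking $[\bar{A}_{er}\;\bar{B}_{er}\;\mathbf{0}\;\mathbf{0}]$ so that the mixed quadratic term $-\bar{A}_{er}^{T}\bar{B}_{er}$ emerges naturally from $M^{T}M$; once this extractor is chosen, the psd-convex-concave decomposition and the concavity claim reduce to standard facts about $A\mapsto A^{T}A$ composed with affine maps.
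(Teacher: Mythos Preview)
Your overall strategy---extract the quadratic upper-left block as $-M^{T}M$ for an affine $M$ and then invoke psd-convexity of $X\mapsto X^{T}X$ composed with an affine map---matches the paper's proof. However, there is a genuine gap in your first step: the linearity claims for $\bar{A}_{er}$ and $\bar{B}_{er}$ do \emph{not} follow from mere inspection of \eqref{Aen}--\eqref{matrices}.

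The matrix $S_{r}^{L}=(S_{r}^{T}\hat{E}^{-1}S_{r})^{-1}S_{r}^{T}\hat{E}^{-1}$ itself depends nonlinearly on $\hat{E}^{-1}$. Hence the block $-S_{r}^{L}\hat{L}\hat{E}^{-1}S_{r}$ inside $\bar{A}_{er}$ is a priori a nonlinear function of $\hat{E}^{-1}$, and your assertion that ``$\bar{B}_{er}=\beta S_{r}^{L}\hat{F}$ depends on $\hat{E}$ through $\beta$'' is incorrect as stated, since $S_{r}^{L}$ also depends on $\hat{E}$. The paper closes this gap via Proposition~\ref{propone}, which uses the matrix inversion lemma to show two nontrivial facts: (i) $S_{r}^{L}\hat{D}=-[\,I_{r-1}\ \mathbf{0}\,]\hat{D}$ is \emph{independent} of $\hat{E}$, so that $\bar{A}_{er}=-\,S_{r}^{L}\hat{D}\,\hat{W}\hat{D}^{T}\hat{E}^{-1}S_{r}$ is genuinely linear in $\hat{E}^{-1}$; and (ii) $\mathbf{tr}(\hat{E})\,S_{r}^{L}$ is linear in $\hat{E}$, so that $\bar{B}_{er}=\tfrac{1}{\mathbf{tr}(E)}\bigl(\mathbf{tr}(\hat{E})S_{r}^{L}\bigr)\hat{F}$ is linear in $\hat{E}$, hence in $Z$. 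Without this (or an equivalent computation), your map $M$ has not been shown to be affine in $(\hat{E}^{-1},Z)$, and the composition argument does not apply. Once you supply this missing piece, your decomposition $\phi_{\infty}=-M^{T}M+L$ and the paper's reduction to the upper-left block $\mathcal{L}^{T}\mathcal{L}$ with $\mathcal{L}=[\bar{A}_{er}\ \bar{B}_{er}]$ are equivalent.
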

To prove Lemma \ref{lemmatwo}, we first need to prove the following proposition.
\begin{proposition}\label{propone}
Consider the full and reduced order systems $\mathbf{\Sigma}$ and $\mathbf{\hat{\Sigma}}$, and assume that the nodal time-scale matrices are partitioned as in $E=\begin{bmatrix}
\bar{E} & \mathbf{0}\\
\mathbf{0} & e_n
\end{bmatrix}$ and $\hat{E}=\begin{bmatrix}
\hat{\bar{E}} & \mathbf{0}\\
\mathbf{0} & e_r
\end{bmatrix}$, where $\bar{E}\in\mathbb{D}_{++}^{n-1}$ and $\hat{\bar{E}}\in\mathbb{D}_{++}^{r-1}$, respectively. Define $S_n$, $S_r$, $S_n^L$, and $S_r^L$ as in \eqref{SnL}. Then,
\begin{align}
\mathbf{tr}(E)S_n^{L}
        &= \begin{bmatrix}
        \bar{E}\mathbf{1}_{n-1}\mathbf{1}^T_{n-1}{-}\mathbf{tr}(E)I_{n-1}&\bar{E}\mathbf{1}_{n-1}
        \end{bmatrix}\label{eqw1},\\S_n^{L}D&=-\left[I_{n-1}\:\:\:\mathbf{0}\right]D\label{eqw2},\\
        \mathbf{tr}(\hat{E})S_r^{L}
        &= \begin{bmatrix}
        \hat{\bar{E}}\mathbf{1}_{r-1}\mathbf{1}^T_{r-1}{-}\mathbf{tr}(\hat{E})I_{r-1}&\hat{\bar{E}}\mathbf{1}_{r-1}
        \end{bmatrix}\label{eqw3},\\S_r^{L}\hat{D}&=-\left[I_{r-1}\:\:\:\mathbf{0}\right]\hat{D}\label{eqw4}.
\end{align}
\end{proposition}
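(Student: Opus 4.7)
The plan is to prove the four identities by direct block computation, exploiting the assumed partitions of $E$ and $\hat{E}$ together with the defining formula $S_n^{L}=(S_n^{T}E^{-1}S_n)^{-1}S_n^{T}E^{-1}$ and the graph-theoretic property $\mathbf{1}_n^{T}D=\mathbf{0}$. The first two identities will be established first, and the remaining two will then follow by an identical argument applied to $\hat{\mathcal{G}}$.

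For \eqref{eqw1}, I will begin by computing the two ingredients of $S_n^{L}$ explicitly using the block structure of $E$. A direct expansion gives $S_n^{T}E^{-1}=[-\bar{E}^{-1}\:\:\:e_n^{-1}\mathbf{1}_{n-1}]$ and $S_n^{T}E^{-1}S_n=\bar{E}^{-1}+e_n^{-1}\mathbf{1}_{n-1}\mathbf{1}_{n-1}^{T}$. The middle factor is thus a rank-one perturbation of the positive diagonal matrix $\bar{E}^{-1}$, so the Sherman–Morrison formula applies and, after using $1+e_n^{-1}\mathbf{1}_{n-1}^{T}\bar{E}\mathbf{1}_{n-1}=\mathbf{tr}(E)/e_n$, yields
\begin{equation*}
(S_n^{T}E^{-1}S_n)^{-1}=\bar{E}-\tfrac{1}{\mathbf{tr}(E)}\bar{E}\mathbf{1}_{n-1}\mathbf{1}_{n-1}^{T}\bar{E}.
\end{equation*}
Multiplying this by $S_n^{T}E^{-1}$ column-block by column-block, and clearing the factor $\mathbf{tr}(E)$, yields the two entries on the right-hand side of \eqref{eqw1}. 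This is the main, though still routine, calculation.

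For \eqref{eqw2}, I partition $D$ conformably with $E$ as $D=[D_1^{T}\:\:\:D_2^{T}]^{T}$ with $D_1\in\mathbb{R}^{(n-1)\times|\mathcal{E}|}$ and $D_2\in\mathbb{R}^{1\times|\mathcal{E}|}$. Substituting \eqref{eqw1} gives
\begin{equation*}
\mathbf{tr}(E)\,S_n^{L}D=\bar{E}\mathbf{1}_{n-1}\bigl(\mathbf{1}_{n-1}^{T}D_1+D_2\bigr)-\mathbf{tr}(E)\,D_1.
\end{equation*}
The Laplacian property $\mathbf{1}_n^{T}D=\mathbf{0}$ forces $\mathbf{1}_{n-1}^{T}D_1+D_2=\mathbf{0}$, so the first term vanishes and, after dividing by $\mathbf{tr}(E)$, the identity $S_n^{L}D=-D_1=-[I_{n-1}\:\:\:\mathbf{0}]D$ emerges.

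Finally, identities \eqref{eqw3} and \eqref{eqw4} follow by repeating verbatim the previous two steps, with $(E,\bar{E},e_n,n,D)$ replaced by $(\hat{E},\hat{\bar{E}},e_r,r,\hat{D})$; here the only property used beyond the assumed block partition of $\hat{E}$ is that $\hat{D}$ is an incidence matrix of a connected reduced graph, hence $\mathbf{1}_r^{T}\hat{D}=\mathbf{0}$, which has been noted in Section~\ref{twob}. The main obstacle is bookkeeping the Sherman–Morrison step so that the factor $\mathbf{tr}(E)$ cleanly appears in both column blocks of \eqref{eqw1}; once this is handled, the rest is algebraic substitution.
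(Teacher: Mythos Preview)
Your proposal is correct and follows essentially the same route as the paper's proof: both compute $S_n^{T}E^{-1}S_n=\bar{E}^{-1}+e_n^{-1}\mathbf{1}_{n-1}\mathbf{1}_{n-1}^{T}$ and invert it via the matrix inversion lemma (your Sherman--Morrison step), then deduce \eqref{eqw2} from \eqref{eqw1} using $\mathbf{1}_n^{T}D=\mathbf{0}$, with \eqref{eqw3}--\eqref{eqw4} handled by the identical argument on the reduced graph. The only cosmetic difference is that the paper packages the column-sum property as $\mathbf{J}D=\mathbf{0}$ for the $(n-1)\times n$ all-ones matrix $\mathbf{J}=\mathbf{1}_{n-1}\mathbf{1}_n^{T}$, whereas you partition $D$ row-wise; these are equivalent reformulations of the same fact.
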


\begin{proof} We prove \eqref{eqw1} and \eqref{eqw2}; \eqref{eqw3} and \eqref{eqw4} are proved similarly. From the definitions of $S_n$ and $S_n^L$, we have
\begin{align*}
    S_n^L &= \left(S_n^TE^{-1}S_n\right)^{-1}S_n^TE^{-1}\\
    &= \left(\bar{E}^{-1}+e_n^{-1}\mathbf{1}_{n-1}\mathbf{1}_{n-1}^T\right)^{-1}S_n^TE^{-1}\\
    &= \left(\bar{E}^{-1}+e_n^{-1}\mathbf{1}_{n-1}\mathbf{1}_{n-1}^T\right)^{-1}\left[-\bar{E}^{-1}\:\:e_n^{-1}\mathbf{1}_{n-1}\right]\\
    &= \left[\frac{1}{\mathbf{tr}(E)}\bar{E}\mathbf{1}_{n-1}\mathbf{1}_{n-1}^T{-}I_{n-1}\:\:\:\frac{1}{\mathbf{tr}(E)}\bar{E}\mathbf{1}_{n-1}\right],
\end{align*}
where the last equality follows from the matrix inversion lemma. Multiplying both sides by $\mathbf{tr}(E)$ yields \eqref{eqw1}. Equation \eqref{eqw1} is then used to prove equation \eqref{eqw2}. Namely,
\begin{align*}
    S_n^LD&=\left(\frac{1}{\mathbf{tr}(E)}\bar{E}\mathbf{J} - \left[I_{n-1}\:\:\:\mathbf{0} \right] \right)\!D\\
&= \frac{1}{\mathbf{tr}(E)}\bar{E}(\mathbf{J}D) - \left[I_{n-1}\:\:\:\mathbf{0}\right]\!D,
\end{align*}
where $\mathbf{J}$ is the matrix of ones of size $(n-1)\times n$. The structure of the incidence matrix $D$ implies that $\mathbf{J}D=\mathbf{0}$, which completes the proof.
\end{proof}
%\begin{proof}{Proof of Lemma \ref{lemmatwo}:}
\begin{myproof}{Lemma}{\ref{lemmatwo}}
By Proposition \ref{propone}, $\mathbf{tr}(\hat{E})S_r^{L}$ is linear in $\hat{E}$ and $S_r^L\hat{D}$ is independent of $\hat{E}$ for all $\hat{E}\in\mathbb{D}_{++}^{r}$. As such, with the assumed choice of $\alpha=1$ and $\beta=\mathbf{tr}(\hat{E})/\mathbf{tr}(E)$, $\bar{A}_{er}$ is linear in $\hat{E}^{-1}$, $\bar{B}_{er}$ is linear in $\hat{E}$, and $\bar{C}_{er}$ is linear in $\hat{E}^{-1}$. Thus, for fixed $\hat{W}$, the structure of $\phi_{\infty}$ shown in \eqref{phi} can be leveraged by introducing the additional variable $Z$ such that $Z=\hat{E}$ and expressing $\bar{B}_{er}$ in terms of $Z$ to get $\phi_{\infty}(\hat{E}^{-1},Z)$ as in \eqref{eq26}. Define the linear matrix-valued mapping $\mathcal{L}(\hat{E}^{-1},Z)=\begin{bmatrix}\bar{A}_{er} & \bar{B}_{er}\end{bmatrix}$.
Since the off-diagonal block matrices of $-\phi_{\infty}(\hat{E}^{-1},Z)$ are linear in $(\hat{E}^{-1},Z)$, it is enough to check the psd-convexity of the upper-left block of $-\phi_{\infty}(\hat{E}^{-1},Z)$ given by $(\mathcal{L}(\hat{E}^{-1},Z))^T \mathcal{L}(\hat{E}^{-1},Z)$. This term is indeed psd-convex in $(\hat{E}^{-1},Z)$ since it corresponds to the composition of the psd-convex function $X^TX$ with the linear mapping $\mathcal{L}(\hat{E}^{-1},Z)$ \cite{boyd_vandenberghe_2004}.
\end{myproof}

Based on the above discussion, two optimization schemes that tune the weights and time-scales of the reduced system independently are proposed. Depending on whether $\hat{E}$ is fixed and $\hat{W}$ is solved for or vice versa, a different formulation of the problem of minimizing the $\Hinf$-norm of the error system is solved by following the iterative procedure in \cite{convexconcave}. The iterative optimization algorithms for the $\Hinf$-tuning of edge weights for fixed nodal time-scales and the $\Hinf$-tuning of nodal time-scales for fixed edge weights are given in Algorithm \ref{A1} and Algorithm \ref{A2}, respectively. Algorithm \ref{A1} outputs an $\Hinf$-suboptimal matrix of edge weights $\hat{W}_*$ for a given matrix of time-scales $\hat{E}$ by iteratively solving the optimization problem \eqref{weightsopt} defined as follows:
\begin{equation*}
\tag{P1}
\label{weightsopt}
\begin{array}{cc}
\displaystyle\min_{\hat{X},\hat{W},\hat{\gamma}} & \hat{\gamma}\\
\textrm{s.t.}&\psi_{\infty}(\hat{X},\hat{\gamma}) + \tilde{\phi}^{(k)}_{\infty}(\hat{W})\prec\mathbf{0},
\end{array}
\end{equation*}
where $k$ denotes the iteration counter. Algorithm \ref{A2} outputs an $\Hinf$-suboptimal matrix of nodal time-scales $\hat{E}_*$ for a given matrix of edge weights $\hat{W}$ by iteratively solving the optimization problem \eqref{scalesopt} defined as follows:\begin{equation*}
\tag{P2}
\label{scalesopt}
\begin{array}{cc}
\displaystyle\min_{\hat{X},\hat{E}^{-1},Z,\hat{\gamma}} & \hat{\gamma}\\
\textrm{s.t.}&\psi_{\infty}(\hat{X},\hat{\gamma}) + \hat{\phi}^{(k)}_{\infty}(\hat{E}^{-1},Z)\prec\mathbf{0},\\
&\!\!\!\!\!\!\!\!\!\begin{bmatrix}\lambda\hat{E}^{-1} & I_{r}\\ I_{r} & Z\end{bmatrix}\succeq\mathbf{0},~Z - f^{(k)}(\hat{E}^{-1})\preceq\mathbf{0}.
\end{array}
\end{equation*}
The linearized terms $\tilde{\phi}^{(k)}_{\infty}(\hat{W})$, $\hat{\phi}^{(k)}_{\infty}(\hat{E}^{-1},Z)$, and $f^{(k)}(\hat{E}^{-1})$ in \eqref{weightsopt} and \eqref{scalesopt} at iteration $k$ are given by
\begin{align}
    &\tilde{\phi}^{(k)}_{\infty}(\hat{W})=\phi_{\infty}(\hat{W}_{(k)})+D\phi_{\infty}(\hat{W}_{(k)})[\hat{W}-\hat{W}_{(k)}]\label{derivativeofphi},\\
    &\hat{\phi}^{(k)}_{\infty}(\hat{E}^{-1},Z)=\phi_{\infty}(\hat{E}_{(k)}^{-1},Z_{(k)})\nonumber\\&+D\phi_{\infty}(\hat{E}_{(k)}^{-1},Z_{(k)})[(\hat{E}^{-1}-\hat{E}_{(k)}^{-1},Z-Z_{(k)})],\label{secondderivativeofphi}\\
    &f^{(k)}(\hat{E}^{-1})=f\left(\hat{E}^{-1}_{(k)}\right)+Df\left(\hat{E}^{-1}_{(k)}\right)[\hat{E}^{-1}-\hat{E}^{-1}_{(k)}],\label{derivativeoff}
\end{align}
and $f(X)=X^{-1}$. Let $\mathbf{\hat{w}}_{(k)}$ be a vector such that $\hat{W}_{(k)}=\mathbf{diag}(\mathbf{\hat{w}}_{(k)})$ and denote its $i$-th component by $[\mathbf{\hat{w}}_{(k)}]_i$. Then, the derivative operator in (\ref{derivativeofphi}) at iteration $k$ satisfies
\begin{align*}
D\phi_{\infty}(\hat{W}_{(k)})[\hat{W}-\hat{W}_{(k)}]&=\sum_{i=1}^{|\hat{\mathcal{E}}|} [\mathbf{\hat{w}}-\mathbf{\hat{w}}_{(k)}]_i\frac{\partial\phi_{\infty}}{\partial\mathbf{\hat{w}}_{i}}(\mathbf{\hat{w}}_{(k)}).
\end{align*}
The other derivative operators are defined similarly. In Problem \eqref{scalesopt}, $\lambda$ should be ideally set to $1$. In practice, however, $\lambda$ is set to $1\,+\,\nu$ for a small $\nu>0$ to avoid numerical problems when solving Problem \eqref{scalesopt}.
\begin{algorithm}\label{A1}
  \caption{$\Hinf$ Iterative Edge Weight Tuning}
  \Input{$E$, $\hat{E}$, $\hat{W}_{0}$, $L$, $F$, $H$, $\Pi$, $\hat{\theta}$, $\hat{\gamma}_{0}$, $i_{tot}$, $\varepsilon$}
  \Output{$\hat{W}_*$}
  Initialize $\hat{W}_{(0)}=\hat{W}_{0}$ and $\hat{\gamma}_{(0)}=\hat{\gamma}_{0}$\\
  Set iteration step $k \gets 0$\\
  \While{$k\leq i_{tot}$ \normalfont{and} $\hat{\gamma}_{(k)}-\hat{\gamma}_{(k+1)}\geq \varepsilon$}{
    Solve \eqref{weightsopt} to obtain $\hat{W}_*$, $\hat{X}_*$, and $\hat{\gamma}_*$\\
    Update $k \gets k+1$\\
    Update $\hat{W}_{(k)} \gets \hat{W}_*$ and $\hat{\gamma}_{(k)} \gets \hat{\gamma}_*$
  }
\end{algorithm}
\setlength{\intextsep}{0pt}
\begin{algorithm}\label{A2}
  \caption{$\Hinf$ Iterative Nodal Time-scale Tuning}
  \Input{$E$, $\hat{E}_{0}$, $\hat{W}$, $L$, $F$, $H$, $\Pi$, $\hat{\theta}$, $\hat{\gamma}_{0}$, $i_{tot}$, $\varepsilon$}
  \Output{$\hat{E}_*$}
  Initialize $(\hat{E}_{(0)}^{-1},Z_{(0)})=(\hat{E}_{0}^{-1},\hat{E}_{0})$ and $\hat{\gamma}_{(0)}=\hat{\gamma}_{0}$\\
  Set iteration step $k \gets 0$\\
  \While{$k\leq i_{tot}$ \normalfont{and} $\hat{\gamma}_{(k)}-\hat{\gamma}_{(k+1)}\geq \varepsilon$}{
    Solve \eqref{scalesopt} to obtain $\hat{E}_*^{-1}$, $Z_*$, $\hat{X}_*$, and $\hat{\gamma}_*$\\
    Update $k \gets k+1$\\
    Update $(\hat{E}_{(k)}^{-1},Z_{(k)})\gets(\hat{E}_*^{-1},\hat{E}_*)$ and $\hat{\gamma}_{(k)} \gets \hat{\gamma}_*$
  }
  \end{algorithm}
\setlength{\intextsep}{0pt}
\subsection[H2-based Optimization]{$\|\mathbf{\Sigma}-\mathbf{\hat{\Sigma}}\|_{\mathcal{H}_2}$ Optimization}\label{threec}
In this section, we extend the results in \cite{cheng2020reduced}, in the case of undirected graphs, to account for parameterized nodal time-scales in addition to parameterized edge weights in the reduced order system $\mathbf{\hat{\Sigma}}$.
\begin{theorem}\label{theoremthree}
Consider the network system $\mathbf{\Sigma}$ defined in \eqref{originalsystem}. There exists a reduced network system $\mathbf{\hat{\Sigma}}$ defined as in \eqref{reducedsystem} such that $\|\mathbf{\Sigma_e}\|^2_{2} < \gamma$ if and only if there exists $\hat{X} \in  \mathbb{S}_{++}^{m}$, $\hat{W} \in \mathbb{D}_{++}^{|\hat{\mathcal{E}}|}$, $\hat{E} \in \mathbb{D}_{++}^{r}$, $\hat{R} \in \mathbb{S}_{++}^q$, and a sufficiently small $\hat{\theta} \in \mathbb{R}_{++}$ such that the following inequalities are satisfied:
\begin{align}
       \Phi_2(\hat{X},\hat{W},\hat{E})=\psi_{2}(\hat{X}) + \phi_{2}(\hat{W},\hat{E})&\prec \mathbf{0}\label{h2const},\\
       \begin{bmatrix}
         \hat{X} & \hat{\theta}C_e^T\\
         \hat{\theta}C_e  & \hat{R}
       \end{bmatrix} &\succ \mathbf{0}\label{h7const},\\
       \mathbf{tr}(\hat{R})& < \hat{\gamma}\label{h11const},
    \end{align}
where $\hat{\gamma}=\hat{\theta}\gamma$, the matrix-valued mappings $\psi_{2} : \mathbb{S}_{++}^{m}\rightarrow \mathbb{S}^{2m+p}$ and $\phi_{2} : \mathbb{D}_{++}^{|\hat{\mathcal{E}}|}\times\mathbb{D}_{++}^{r} \rightarrow \mathbb{S}^{2m+p}$ are given by
    \begin{align*}
        \psi_{2}(\hat{X}) &= \begin{bmatrix}
    \bar{A}_{en}^T\hat{X} + \hat{X}\bar{A}_{en} & \hat{X}\bar{B}_{en} & \hat{X}J\\
    \bar{B}_{en}^T\hat{X} & -\hat{\theta}I_p & \mathbf{0}\\
    J^T\hat{X} & \mathbf{0} & \mathbf{0}
    \end{bmatrix},\\
        \phi_{2}(\hat{W},\hat{E}) &=  \begin{bmatrix}
    -\bar{A}_{er}^T\bar{A}_{er} & -\bar{A}_{er}^T\bar{B}_{er} & \bar{A}_{er}^T\\
    -\bar{B}_{er}^T\bar{A}_{er} &  -\bar{B}_{er}^T\bar{B}_{er} & \bar{B}_{er}^T\\
    \bar{A}_{er} & \bar{B}_{er} &  -I_{m}
    \end{bmatrix},
    \end{align*}
and the matrices $\bar{A}_{en}$, $\bar{A}_{er}$, $\bar{B}_{en}$, $\bar{B}_{er}$, $\bar{C}_{en}$, $\bar{C}_{er}$, and $J$ are defined in equations \eqref{Aen} through \eqref{matrices}.
\end{theorem}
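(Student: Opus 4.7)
The plan is to mirror the proof of Theorem~\ref{theoremone} but to start from a standard $\H2$ LMI characterization instead of the Bounded Real Lemma. As in Lemma~\ref{lemmaone}, the similarity-transformed realization $(\bar{A}_e,\bar{B}_e,\bar{C}_e)$ has $\bar{A}_e$ Hurwitz and preserves the $\H2$-norm, so it suffices to characterize $\|\mathbf{\bar{\Sigma}_e}\|_{\H2}^{2}<\gamma$ in terms of $\hat{W}$ and $\hat{E}$.

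Among the several equivalent $\H2$ LMI representations, I would pick the one motivated by the structure of $\psi_{2}$, $\phi_{2}$, and \eqref{h7const}: $\|\mathbf{\bar{\Sigma}_e}\|_{\H2}^{2}<\gamma$ if and only if there exist $Y\in\mathbb{S}_{++}^{m}$ and $R\in\mathbb{S}_{++}^{q}$ such that (i)~$\begin{bmatrix}\bar{A}_e^{T}Y+Y\bar{A}_e & Y\bar{B}_e\\ \bar{B}_e^{T}Y & -I_{p}\end{bmatrix}\prec\mathbf{0}$, which by Schur complement on the $(-I_{p})$ block is equivalent to $\bar{A}_e^{T}Y+Y\bar{A}_e+Y\bar{B}_e\bar{B}_e^{T}Y\prec\mathbf{0}$ (so that $Y^{-1}$ upper-bounds the controllability gramian of $\mathbf{\bar{\Sigma}_e}$); (ii)~$\begin{bmatrix}Y & \bar{C}_e^{T}\\ \bar{C}_e & R\end{bmatrix}\succ\mathbf{0}$, which via Schur forces $R\succ\bar{C}_e Y^{-1}\bar{C}_e^{T}$ and hence $R$ dominates $\bar{C}_e P_c\bar{C}_e^{T}$; and (iii)~$\mathbf{tr}(R)<\gamma$. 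The crucial feature of this form is that $\bar{B}_e$ appears only in the Lyapunov LMI and $\bar{C}_e$ only in the covariance LMI, matching the absence of $\bar{C}_e$ in $\phi_{2}$ and its exclusive appearance in \eqref{h7const}.

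I would then replay the manipulations of the proof of Theorem~\ref{theoremone} on (i): Schur-augment by a block row/column carrying $YJ$ and a $-\theta I_{m}$ lower-right corner (equivalent to (i) for $\theta>0$ sufficiently large); substitute the decompositions $\bar{A}_e=\bar{A}_{en}+J\bar{A}_{er}$ and $\bar{B}_e=\bar{B}_{en}+J\bar{B}_{er}$; and apply the analogous congruence transformation to the one used there, which pushes every $\bar{A}_{er}$ and $\bar{B}_{er}$ dependence out of the diagonal blocks. Dividing by $\theta$ and performing the substitutions $\hat{\theta}=1/\theta$, $\hat{X}=Y/\theta$, $\hat{R}=R/\theta$, and $\hat{\gamma}=\hat{\theta}\gamma$ simultaneously in (i), (ii), and (iii) then produces the additive decomposition $\psi_{2}(\hat{X})+\phi_{2}(\hat{W},\hat{E})\prec\mathbf{0}$ in \eqref{h2const}, the rescaled covariance LMI \eqref{h7const}, and the trace bound \eqref{h11const}.

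The hardest part, in my view, is selecting the correct $\H2$ LMI representation at the outset: several standard dual forms (e.g., those based on the observability gramian, or those placing $Y\bar{B}_e$ rather than $\bar{C}_e^{T}$ in the covariance LMI) would scatter $\bar{C}_e$- or $Y$-dependent blocks across both LMIs and would not collapse into the clean sum $\psi_{2}+\phi_{2}$ seen in \eqref{h2const}. Once the right form is identified, what remains is coordinating the common rescaling by $\theta$ across the three conditions, an extra bookkeeping step absent from the single-LMI $\Hinf$ proof.
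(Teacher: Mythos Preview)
Your proposal is correct and follows essentially the same route as the paper: the paper's proof sketch likewise invokes the $\H2$ LMI characterization (citing \cite{Vandeberghe}), mirrors the Schur-augmentation and rescaling of Theorem~\ref{theoremone}, and explicitly records the congruence matrix you allude to as $U=\begin{bmatrix}I_{m}&\mathbf{0}&\mathbf{0}\\ \mathbf{0}&I_{p}&\mathbf{0}\\ -\bar{A}_{er}&-\bar{B}_{er}&I_{m}\end{bmatrix}$. Your identification of the ``correct'' $\H2$ form (Lyapunov LMI carrying $\bar{B}_e$, covariance LMI carrying $\bar{C}_e$) and the simultaneous $\theta$-rescaling across all three conditions are exactly the ingredients the paper uses.
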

\begin{proof}
The proof follows similarly to the proof of Theorem \ref{theoremone} and is very closely related to the proof of \cite[Theorem 1]{cheng2020reduced}. The proof uses the characterization of the $\mathcal{H}_2$-norm of a continuous-time LTI system given in \cite[Section 4.1]{Vandeberghe} and the following decoupling matrix that plays a role similar to the role of matrix $U$ in the proof of Theorem \ref{theoremone}:
\begin{equation*}
U = \begin{bmatrix}
I_{m} & \mathbf{0}  & \mathbf{0}\\
\mathbf{0} & I_p & \mathbf{0}\\
-\bar{A}_{er} & -\bar{B}_{er}  & I_m
\end{bmatrix}.\end{equation*}\end{proof}

Theorem \ref{theoremthree} gives a characterization of the $\H2$-norm of the error system $\mathbf{\Sigma_e}$ that will be used for the optimal selection of the edge weights and nodal time-scales of $\mathbf{\hat{\Sigma}}$. Namely, we want to find $\hat{X} \in  \mathbb{S}_{++}^{m}$, $\hat{W} \in \mathbb{D}_{++}^{|\hat{\mathcal{E}}|}$, $\hat{E} \in \mathbb{D}_{++}^{r}$, and $\hat{R} \in \mathbb{S}_{++}^q$ that minimize $\hat{\gamma}$ subject to (\ref{h2const}), (\ref{h7const}), and (\ref{h11const}) for a given $\hat{\theta}>0$. As before, two  schemes that tune the weights and time-scales separately are proposed. Depending on whether $\hat{E}$ or $\hat{W}$ is fixed, a different formulation of the problem of minimizing the $\H2$-norm of the error system is solved. For \textit{fixed $\hat{E}$}, the optimization problem reduces to the one in \cite{8795611} (therein $\hat{E} = \Pi^TE\Pi$ specifically), which we solve using the method of \cite{cheng2020reduced} (not the one in \cite{8795611}). However, for \textit{fixed $\hat{W}$}, $\phi_{2}(\hat{E})$ \textit{is not} psd-concave in $\hat{E}$, and so the problem cannot be directly solved by following the procedure in \cite{convexconcave}. To resolve this issue, the same techniques used in Section \ref{threeb} to deal with $\phi_{\infty}(\hat{E})$ for fixed $\hat{W}$ are applied here. The details are omitted for brevity. The iterative optimization algorithms for the $\H2$-tuning of edge weights for fixed nodal time-scales and the $\H2$-tuning of nodal time-scales for fixed edge weights are given in Algorithms \ref{A3} and \ref{A4}, respectively. Algorithm \ref{A3} returns an $\mathcal{H}_2$-suboptimal matrix of edge weights $\hat{W}_*$ for a given matrix of time-scales $\hat{E}$ by iteratively solving the optimization problem \eqref{weightsopt2} given by
\begin{equation*}
\tag{P3}
\label{weightsopt2}
\begin{array}{cc}
\displaystyle \min_{\hat{X},\hat{W},\hat{R}}& \mathbf{tr}(\hat{R})\\
\textrm{s.t.}&\!\!\!\!\!\!\!\!\!
\psi_{2}(\hat{X}) + \tilde{\phi}^{(k)}_{2}(\hat{W})\prec \mathbf{0},~\begin{bmatrix} \hat{X} & \hat{\theta}C_e^T\\ \hat{\theta}C_e  & \hat{R} \end{bmatrix} \succ \mathbf{0},
\end{array}
\end{equation*}
where the linearized term $\tilde{\phi}^{(k)}_{2}(\hat{W})$ at iteration $k$ is defined similarly to $\tilde{\phi}^{(k)}_{\infty}(\hat{W})$ in (\ref{derivativeofphi}). Algorithm \ref{A4} returns an $\mathcal{H}_2$-suboptimal matrix of nodal time-scales $\hat{E}_*$ for a given matrix of edge weights $\hat{W}$ by iteratively solving the following optimization problem \eqref{scalesopt2}, where $\hat{\phi}^{(k)}_{2}(\hat{E}^{-1},Z)$ is defined similarly to $\hat{\phi}^{(k)}_{\infty}(\hat{E}^{-1},Z)$ in \eqref{secondderivativeofphi}:
\begin{equation*}
    \tag{P4}
\label{scalesopt2}
\begin{array}{cc}
\displaystyle \min_{\hat{X},\hat{E}^{-1},Z,\hat{R}}& \mathbf{tr}(\hat{R})\\
\textrm{s.t.}
&\!\!\!\!\!\!\!\!\!\!\!\!\!\psi_{2}(\hat{X}) + \hat{\phi}^{(k)}_{2}(\hat{E}^{-1},Z) \prec \mathbf{0},~Z - f^{(k)}(\hat{E}^{-1})\preceq \mathbf{0},\\
&\!\!\!\!\!\!\!\!\!\!\!\!\!\!\begin{bmatrix} \hat{X} & \hat{\theta}C_e^T(\hat{E}^{-1})\\ \hat{\theta}C_e(\hat{E}^{-1}) &\hat{R}\end{bmatrix}\succ\mathbf{0},~\begin{bmatrix} \lambda\hat{E}^{-1} & I_{r}\\ I_{r} & Z \end{bmatrix} \succeq \mathbf{0}.
\end{array}
\end{equation*}
\begin{algorithm}\label{A3}
  \caption{$\H2$ Iterative Edge Weight Tuning}
  \Input{$E$, $\hat{E}$, $\hat{W}_0$, $L$, $F$, $H$, $\Pi$, $\hat{\theta}$, $r$, $i_{tot}$, $\varepsilon$}
  \Output{$\hat{W}_*$}
  Initialize $\hat{W}_{(0)}=\hat{W}_{0}$ and $\mathbf{tr}(\hat{R}_{(0)})=r$\\
  Set iteration step $k \gets 0$\\
  \While{$k\leq i_{tot}$ \normalfont{and} $\mathbf{tr}(\hat{R}_{(k)})-\mathbf{tr}(\hat{R}_{(k+1)})\geq\varepsilon$}{
    Solve \eqref{weightsopt2} to obtain $\hat{W}_*$, $\hat{X}_*$, and $\hat{R}_*$\\
    Update $k \gets k+1$\\
    Update $\hat{W}_{(k)}\gets \hat{W}_*$ and $\hat{R}_{(k)} \gets \hat{R}_*$
  }
\end{algorithm}
\begin{algorithm}\label{A4}
  \caption{$\H2$ Iterative Nodal Time-scale Tuning}
  \Input{$E$, $\hat{E}_0$, $\hat{W}$, $L$, $F$, $H$, $\Pi$, $\hat{\theta}$, $r$, $i_{tot}$, $\varepsilon$}
  \Output{$\hat{E}_*$}
  Initialize $(\hat{E}_{(0)}^{-1},Z_{(0)})=(\hat{E}_{0}^{-1},\hat{E}_{0})$ and $\mathbf{tr}(\hat{R}_{(0)})=r$\\
  Set iteration step $k \gets 0$\\
  \While{$k\leq i_{tot}$ \normalfont{and} $\mathbf{tr}(\hat{R}_{(k)})-\mathbf{tr}(\hat{R}_{(k+1)})\geq\varepsilon$}{
    Solve \eqref{scalesopt2} to obtain $\hat{E}_*^{-1}$, $Z_*$, $\hat{X}_*$, and $\hat{R}_*$\\
    Update $k \gets k+1$\\
    Update $(\hat{E}_{(k)}^{-1},Z_{(k)})\gets(\hat{E}_*^{-1},\hat{E}_*)$ and $\hat{R}_{(k)}\!\gets\!\hat{R}_*$}
\end{algorithm}
\section{ILLUSTRATIVE EXAMPLE}\label{sectionfour}
This section considers the example adopted in \cite{8795611} and \cite{S3} to illustrate the effectiveness of the proposed model reduction method and the edge weights and nodal time-scales tuning algorithms. The consensus network system $\mathbf{\Sigma}$ considered is characterized by a graph $\mathcal{G}(\mathcal{V},\mathcal{E},E,W)$ that has ten unscaled nodes and fifteen weighted interconnections. Figure \ref{figone} shows the underlying graphs $\mathcal{G}$ and $\hat{\mathcal{G}}$ of the original and reduced network systems $\mathbf{\Sigma}$ and $\mathbf{\hat{\Sigma}}$, respectively. A colour-coded clustering of $\mathcal{G}$ is given by $\mathcal{C}_1=\{1,2,3,4\}$, $\mathcal{C}_2=\{5,6\}$, $\mathcal{C}_3=\{7\}$, $\mathcal{C}_4=\{8\}$, and $\mathcal{C}_5=\{9,10\}$. Nodes 6 and 7 receive external inputs and are highlighted with an additional circle. Finally, the difference between the state variables of nodes 6 and 10 is measured.
The incidence matrix $D$ of $\mathbf{\Sigma}$ is given by\begin{align*}
    D =\begin{bmatrix}
  \begin{smallmatrix}
    -1 & 0 & 0 & 0 & 0 & 0 & 0 & 0 & 0 & 0 & 0 & 0 & 0 & 0 & 0\\
    0 & -1 & -1 & 0 & 0 & 0 & 0 & 0 & 0 & 0 & 0 & 0 & 0 & 0 & 0\\
    0 & 0 & 0 & -1 & -1 & -1 & 0 & 0 & 0 & 0 & 0 & 0 & 0 & 0 & 0\\
0 & 0 & 0 & +1 & 0 & 0 & -1 & 0 & 0 & 0 & 0 & 0 & 0 & 0 & 0\\
    0 & +1 & 0 & 0 & +1 & 0 & +1 & -1 & -1 & -1 & 0 & 0 & 0 & 0 & 0\\
    +1 & 0 & +1 & 0 & 0 & +1 & 0 & +1 & 0 & 0 & -1 & -1 & 0 & 0 & 0\\
   0 & 0 & 0 & 0 & 0 & 0 & 0 & 0 & +1 & 0 & +1 & 0 & -1 & -1 & -1\\
    0 & 0 & 0 & 0 & 0 & 0 & 0 & 0 & 0 & +1 & 0 & +1 & +1 & 0 & 0\\
    0 & 0 & 0 & 0 & 0 & 0 & 0 & 0 & 0 & 0 & 0 & 0 & 0 & +1 & 0\\
    0 & 0 & 0 & 0 & 0 & 0 & 0 & 0 & 0 & 0 & 0 & 0 & 0 & 0 & +1
  \end{smallmatrix}
\end{bmatrix}.
\end{align*}
The system matrices and the clustering matrix are given
\begin{align*}
     &W=\mathbf{diag}((5,\:3,\:2,\:1,\:2,\:3,\:5,\:2,\:6,\:7,\:6,\:7,\:1,\:1,\:1)),\\
     &L=\!\!\begin{bmatrix}
    \begin{smallmatrix}
    5&0&0&0&0&-5&0&0&0&0\\
    0&5&0&0&-3&-2&0&0&0&0\\
    0&0&6&-1&-2&-3&0&0&0&0\\
    0&0&-1&6&-5&0&0&0&0&0\\
    0&-3&-2&-5&25&-2&-6&-7&0&0\\
    -5&-2&-3&0&-2&25&-6&-7&0&0\\
    0&0&0&0&-6&-6&15&-1&-1&-1\\
    0&0&0&0&-7&-7&-1&15&0&0\\
    0&0&0&0&0&0&-1&0&1&0\\
    0&0&0&0&0&0&-1&0&0&1
    \end{smallmatrix}
    \end{bmatrix}\!\!,\: \Pi=\!\!\begin{bmatrix}\begin{smallmatrix}
        1 & 0 & 0 & 0 & 0\\
        1 & 0 & 0 & 0 & 0\\
        1 & 0 & 0 & 0 & 0\\
        1 & 0 & 0 & 0 & 0\\
        0 & 1 & 0 & 0 & 0\\
        0 & 1 & 0 & 0 & 0\\
        0 & 0 & 1 & 0 & 0\\
        0 & 0 & 0 & 1 & 0\\
        0 & 0 & 0 & 0 & 1\\
        0 & 0 & 0 & 0 & 1
      \end{smallmatrix}
    \end{bmatrix}\!\!,\\
    &F{=}[\begin{matrix}\begin{smallmatrix}0&0&0&0&0&1&0&0&0&0\\0&0&0&0&0&0&1&0&0&0\end{smallmatrix}\end{matrix}]^T\!\!\!\!,\:\: H{=}\!\begin{bmatrix}
      \begin{smallmatrix}
          0 & 0 & 0 & 0 & 0 & 1 & 0 & 0 & 0 & -1
         \end{smallmatrix}
    \end{bmatrix}\!,\: E{=}I_{10}.
\end{align*}
Using the characteristic clustering matrix $\Pi$, the reduced incidence matrix $\hat{D}$ of $\mathbf{\hat{\Sigma}}$ is obtained by removing the duplicate and zero columns from $\Pi^TD$:
\begin{align*}
  \hat{D}=\begin{bmatrix}\begin{smallmatrix}
       -1 & 0 & 0 & 0 & 0\\
   +1 & -1 & -1 & 0 & 0\\
   0 & +1 & 0 & -1 & -1\\
   0 & 0 & +1 & +1 & 0\\
   0 & 0 & 0 & 0 & +1
  \end{smallmatrix}
\end{bmatrix}.
\end{align*}
The remaining matrices of the parameterized reduced network system $\mathbf{\hat{\Sigma}}$ are given by
\begin{align*}
    \hat{L}&=\begin{bmatrix}
      \begin{smallmatrix}
        \hat{w}_1 & -\hat{w}_1 & 0 & 0 & 0\\
       -\hat{w}_1 & \hat{w}_1+\hat{w}_2+\hat{w}_3 & -\hat{w}_2 & -\hat{w}_3 & 0\\
        0 & -\hat{w}_2 & \hat{w}_2+\hat{w}_4+\hat{w}_5 & -\hat{w}_4 & -\hat{w}_5\\
        0 &-\hat{w}_3 & -\hat{w}_4 & \hat{w}_3+\hat{w}_4 & 0\\
        0 & 0 & -\hat{w}_5 & 0 & \hat{w}_5
      \end{smallmatrix}
    \end{bmatrix}\!\!,\:\hat{F} = \begin{bmatrix}
      \begin{smallmatrix}
      0 & 0\\
      1 & 0\\
      0 & 1\\
      0 & 0\\
      0 & 0
    \end{smallmatrix}
    \end{bmatrix},\\
    \hat{E}&=\mathbf{diag}((\hat{e}_1,\:\hat{e}_2,\:\hat{e}_3,\:\hat{e}_4,\:\hat{e}_5)),\: \hat{H}=\begin{bmatrix}
      \begin{smallmatrix}
        0 & 1 & 0 & 0 & -1
    \end{smallmatrix}
    \end{bmatrix},\\
    \hat{W}&=\mathbf{diag}((\hat{w}_1,\:\hat{w}_2,\:\hat{w}_3,\:\hat{w}_4,\:\hat{w}_5)),
\end{align*}
\begin{figure}
    \vspace{-8mm}
    \centering
    \subfloat[Original Network]{\includegraphics[scale = 0.3]{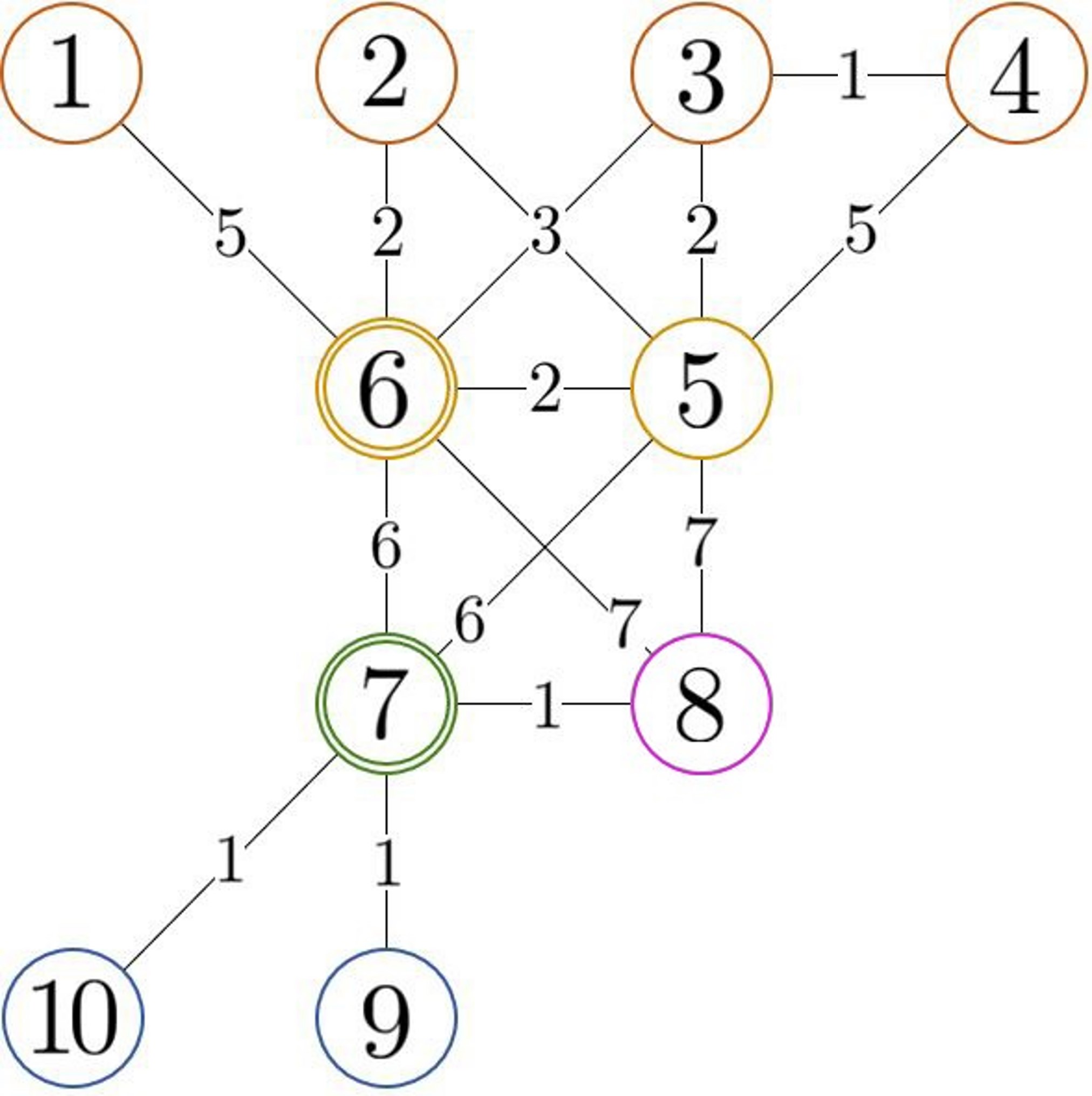}}
    \subfloat[Reduced Network]{\includegraphics[scale = 0.27]{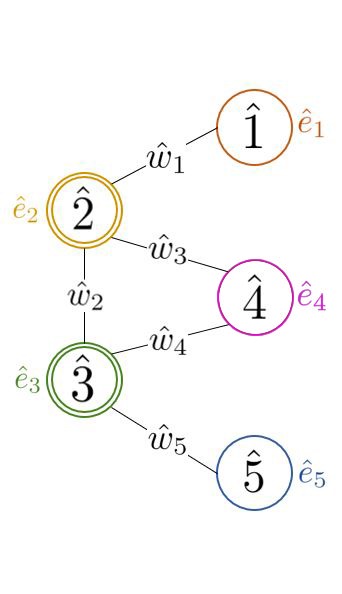}}
    \caption{(a) The underlying graph $\mathcal{G}$ of the original network system $\mathbf{\Sigma}$. (b) The underlying graph $\hat{\mathcal{G}}$ of the parameterized reduced network system $\mathbf{\hat{\Sigma}}$. The numbers inside the circles represent the node numbers, and the numbers and symbols on the connecting lines between the nodes represent the edge weights. Colour-matching nodes in the graph in (a) are clustered together to result in the graph in (b). The time-scales of the nodes in the graph in (a) are set to unity, and those of the nodes in the graph in (b) are parameterized by $\hat{e}_i$. Nodes receiving external input are highlighted with two circles.}\label{figone}
    \vspace{-4mm}
\end{figure}where $\hat{E}$ and $\hat{W}$ are the parameter matrices to be determined by the algorithms proposed in the paper.

Algorithms \ref{A1} to \ref{A4} are summarized as follows:
\begin{itemize}
\item \textbf{Algorithm \ref{A1}} (A\ref{A1}) outputs an $\Hinf$-suboptimal matrix of weights $\hat{W}_*$ for a given matrix of time-scales $\hat{E}$.
\item \textbf{Algorithm \ref{A2}} (A\ref{A2}) outputs an $\Hinf$-suboptimal matrix of time-scales $\hat{E}_*$ for a given matrix of weights $\hat{W}$.
\item \textbf{Algorithm \ref{A3}} (A\ref{A3}) outputs an $\mathcal{H}_2$-suboptimal matrix of weights $\hat{W}_*$ for a given matrix of time-scales $\hat{E}$.
\item \textbf{Algorithm \ref{A4}} (A\ref{A4}) outputs an $\mathcal{H}_2$-suboptimal matrix of time-scales $\hat{E}_*$ for a given matrix of weights $\hat{W}$.
\end{itemize}
The PGP paradigm \cite{S3} is used to determine the fixed matrix in the algorithms as well as the initial guess for the tuning. For example, $\hat{E}= \mathbf{diag}((4,\:2\:,1\:,1\:,2))$ and $\hat{W}_0= \mathbf{diag}((20,\:12\:,14\:,1\:,2))$. This is done to illustrate the advantage of the proposed optimal parameter selection algorithms in guaranteeing smaller reduction errors than the classical paradigm. Note that in this case, Algorithm \ref{A3} solves the problem considered in \cite{8795611} following the approach in \cite{cheng2020reduced}. The expression AX $\rightarrow$ AY indicates that we first perform a tuning of the edge weights using Algorithm X, and then we fix the resulting suboptimal edge weight matrix when tuning the nodal time-scales via Algorithm Y, or vice-versa. At each iteration of the algorithms, the SDPs are parsed using \texttt{YALMIP} \cite{Lofberg2004} and solved using \texttt{MOSEK} \cite{MOSEK}. We set $\hat{\theta} = 0.005$ for Algorithms \ref{A1} and \ref{A2}, $\hat{\theta} = 0.001$ for Algorithms \ref{A3} and \ref{A4}, and $\lambda = 1.0001$ for Algorithms \ref{A2} and \ref{A4}. While the algorithms are guaranteed to converge to a locally optimal point, in our implementation, the algorithms are stopped after a specified $i_{tot}$ number of iterations, and we report the achieved tolerance values $\varepsilon_{\infty}=\hat{\gamma}_{(i_{tot}-1)}-\hat{\gamma}_{(i_{tot})}$ and $\varepsilon_{2}=\mathbf{tr}(\hat{R}_{(i_{tot}-1)})-\mathbf{tr}(\hat{R}_{(i_{tot})})$ at exit.

The results are reported in Tables \ref{tableone}, \ref{tabletwo}, and \ref{tablethree}. Table \ref{tableone} and Table \ref{tabletwo} show the normalized $\mathcal{H}_2$- and $\mathcal{H}_{\infty}$-reduction errors, which are computed using the built-in \texttt{norm($\cdot\:,\:$2)} and \texttt{norm($\cdot\:,\:$inf)} functions in MATLAB, respectively. The built-in \:\texttt{minreal($\cdot$)} function is used on the original system $\mathbf{\Sigma}$ before computing $\|\mathbf{\Sigma}\|_{\Hinf}$ and $\|\mathbf{\Sigma}\|_{\H2}$. The suboptimal edge weights and nodal time-scales corresponding to the final output returned by each tuning algorithm are given in Table \ref{tablethree}, where $\hat{W}_*=\mathbf{diag}(\hat{w}_*)$ and $\hat{E}_*=\mathbf{diag}(\hat{e}_*)$.

It can be seen from the results that a significant reduction in the normalized reduction errors was achieved when the nodal time-scales are parameterized and tuned in addition to the edge weights. Compared with the PGP paradigm, Algorithms A\ref{A2} $\rightarrow$ A\ref{A1} and A\ref{A1} $\rightarrow$ A\ref{A2} reduce the normalized $\Hinf$ reduction error by $73.3\%$ and $72.6\%$, respectively. Similarly, Algorithms A\ref{A3} $\rightarrow$ A\ref{A4} and A\ref{A4} $\rightarrow$ A\ref{A3} achieve counterpart reductions in the normalized $\H2$ reduction error of $87.5\%$ and $80.6\%$, respectively. Compared to Algorithm A\ref{A3}, Algorithm A\ref{A3} $\rightarrow$ A\ref{A4} reduces the normalized $\H2$ reduction error by $75.1\%$. In the case of $\Hinf$-based model reduction, the tuning of the edge weights only, i.e., Algorithm \ref{A1}, remains a novel contribution of this work. Finally, it seems that, for this example, tuning the edge weights after tuning the nodal time-scales shows a small improvement in the normalized reduction errors compared to tuning the nodal time-scales only. For example, when applied after Algorithm \ref{A2}, Algorithm \ref{A1} reduces the normalized $\Hinf$ reduction error by $2.5\%$ only (from $0.040$ to $0.039$); and when applied after Algorithm \ref{A4}, Algorithm \ref{A3} reduces the normalized $\H2$ reduction error by $2.6\%$ only (from $0.078$ to $0.076$). It is of interest to conduct extensive simulations to validate this observation in other examples.
\begin{table}[t]
    \centering
    \caption{Normalized $\Hinf$ reduction error using Algorithm \ref{A1} and Algorithm \ref{A2}, with the corresponding values of $i_{tot}$ and $\varepsilon_{\infty}$.}
    \begin{tabular}{|c c c c|}
        \hline
         &  $\frac{\|\mathbf{\Sigma}-\mathbf{\hat{\Sigma}}\|_{\Hinf}}{\|\mathbf{\Sigma}\|_{\Hinf}}$ & $i_{tot}$ & $\varepsilon_{\infty}$ \\[1.5ex]
        \hline\hline
        PGP \cite{S3} & 0.146 & --- & ---\\[0.5ex]
        \hline
        A\ref{A1} & $0.076$& $20000$ & 1E-10\\[0.5ex]
        \hline
        A\ref{A2} & $0.040$ & $6000$ & 3E-10\\[0.5ex]
        \hline
        A\ref{A1} $\rightarrow$ A\ref{A2} & $0.040$& $20000/2000$ & 1E-10/2E-13\\[0.5ex]
        \hline
        A\ref{A2} $\rightarrow$ A\ref{A1} & $0.039$& $6000/1000$ & 3E-10/2E-10\\[0.5ex]
        \hline
    \end{tabular}
    \label{tableone}
\end{table}
\begin{table}[t]
\centering
 \caption{Normalized $\H2$ reduction error using Algorithm \ref{A3} and Algorithm \ref{A4}, with the corresponding values of $i_{tot}$ and $\varepsilon_{2}$.} \begin{tabular}{|c c c c|}
 \hline
   & $\frac{\|\mathbf{\Sigma}-\mathbf{\hat{\Sigma}}\|_{\mathcal{H}_2}}{\|\mathbf{\Sigma}\|_{\H2}}$ & $i_{tot}$ & $\varepsilon_{2}$\\[1.5ex]
 \hline\hline
PGP \cite{S3} &  0.392 & --- & ---\\[0.5ex]
 \hline
 A\ref{A3} \cite{8795611,cheng2020reduced} &  $0.313$ & $1000$ & 4E-15\\[0.5ex]
 \hline
 A\ref{A4} & $0.078$ & $2500$ & 3E-08\\[0.5ex]
 \hline
A\ref{A3} $\rightarrow$ A\ref{A4} & $0.049$ & $1000/1000$ & 4E-15/1E-12\\[0.5ex]
 \hline
A\ref{A4} $\rightarrow$ A\ref{A3} & $0.076$ & $2500/500$ & 3E-08/7E-12\\[0.5ex]
 \hline
 \end{tabular}
 \vspace{-3mm}
\label{tabletwo}
\end{table}
\begin{table}%[!h]
\centering
 \begin{tabular}{|c c|}
 \hline
Algorithm & Optimized Parameters\\ [0.5ex]
 \hline\hline
A\ref{A1} & $\hat{w}_*=[12.49,\:9.50,\:0.80,\:36.44,\:1.96]$ \\
%12.494290040418477   9.500478967489988   0.800742250849062  36.443961490597417   1.963689342774643
\hline
A\ref{A1} $\rightarrow$ A\ref{A2} & $\hat{w}_*=[12.49,\:9.50,\:0.80,\:36.44,\:1.96]$\\
&$\hat{e}_*=[3.74,\:1.47,\:1.61,\:1.52,\:1.91]$\\
%3.738253163290806 1.471606293872056   1.611786166958639   1.524375813432082   1.911936654244844
\hline \hline
A\ref{A2} & $\hat{e}_*=[5.63,\:0.90,\:4.27,\:0.24,\:1.96]$ \\
 \hline
 %5.634841711275264   0.899324540725021   4.273813097568843   0.240473762147864   1.963145209935423
A\ref{A2} $\rightarrow$ A\ref{A1} & $\hat{e}_*=[5.63,\:0.90,\:4.27,\:0.24,\:1.96]$\\
&$\hat{w}_*=[19.28,\:6.48,\:31.86,\:8.30,\:2.01]$
\\
%19.275708151269992   6.474767210025489  31.858070882296865   8.299453458781677   2.008070028695235
 \hline \hline
A\ref{A3} \cite{8795611,cheng2020reduced} & $\hat{w}_*=[11.30,\:10.03,\:0.04,\:11.09,\:2.10]$
\\
%11.300142331368688  10.027181313723906   0.037503027110989  11.091720225083778   2.077596166966750
 \hline
A\ref{A3} $\rightarrow$ A\ref{A4} &
$\hat{w}_*= [11.30,\:10.03,\:0.04,\:11.09,\:2.10]$\\&$\hat{e}_*=[3.34,\:0.96,\:1.74,\:1.27,\:1.96]$\\
 \hline \hline
 %3.342700652913453   0.956671150498810   1.739699796682852   1.264591676984644   1.955883020822052
A\ref{A4} & $\hat{e}_*=[7.58,\:1.45,\:3.49,\:0.36,\:2.26]$ \\
%7.582963185911243   1.449197439836531   3.490533776442282   0.361567644767387   2.264062106582477
\hline
A\ref{A4} $\rightarrow$ A\ref{A3} & $\hat{e}_*=[7.58,\:1.45,\:3.49,\:0.36,\:2.26]$\\
&$\hat{w}_*=[20.52,\:12.71,\:5.85,\:1.84,\:2.20]$  \\[1ex]
%20.520243278871671  12.707882731529146   5.845453474892598   1.842131207834407   2.195857280083491
 \hline
 \end{tabular}
 \caption{Optimized edge weights and nodal time-scales returned by the proposed algorithms, rounded to two decimal places.}
\label{tablethree}
\vspace{-4mm}
\end{table}
\section{CONCLUSION}\label{sectionfive}
In this paper, $\mathcal{H}_{\infty}$- and $\mathcal{H}_2$-based model reduction methods for consensus network systems are proposed, whereby a reduced network system with parameterized edge weights and nodal time-scales is tuned via iterative optimization algorithms. Compared to the PGP paradigm in \cite{S3} and the $\H2$-based methods for optimal edge weights selection in \cite{8795611,cheng2020reduced}, our $\H2$-based methods allow for additionally optimizing the nodal time-scales of the reduced order system, thereby further reducing the reduction error. Our $\Hinf$-based approaches are novel for optimizing the edge weights, nodal time-scales, or both sequentially. Further investigation into simultaneous edge weight and nodal time-scale optimization is of interest for future work.
%%%%%%%%%%%%%%%%%%%%%%%%%%%%%%%%%%%%%%%%%%%%%%%%%%%%%%%%%%%%%%%%%%%%%%%%%%%%%%%%
\bibliographystyle{IEEEtran}
\bibliography{ACC2022_References}

% Generated by IEEEtran.bst, version: 1.14 (2015/08/26)
\begin{thebibliography}{10}
\providecommand{\url}[1]{#1}
\csname url@samestyle\endcsname
\providecommand{\newblock}{\relax}
\providecommand{\bibinfo}[2]{#2}
\providecommand{\BIBentrySTDinterwordspacing}{\spaceskip=0pt\relax}
\providecommand{\BIBentryALTinterwordstretchfactor}{4}
\providecommand{\BIBentryALTinterwordspacing}{\spaceskip=\fontdimen2\font plus
\BIBentryALTinterwordstretchfactor\fontdimen3\font minus
  \fontdimen4\font\relax}
\providecommand{\BIBforeignlanguage}[2]{{%
\expandafter\ifx\csname l@#1\endcsname\relax
\typeout{** WARNING: IEEEtran.bst: No hyphenation pattern has been}%
\typeout{** loaded for the language `#1'. Using the pattern for}%
\typeout{** the default language instead.}%
\else
\language=\csname l@#1\endcsname
\fi
#2}}
\providecommand{\BIBdecl}{\relax}
\BIBdecl

\bibitem{9296294}
Y.~Yaz{\i}c{\i}o{\u{g}}lu and A.~Speranzon, ``High dimensional robust consensus
  over networks with limited capacity,'' \emph{IEEE Control Systems Letters},
  vol.~5, no.~6, pp. 2024--2029, 2020.

\bibitem{review}
F.~Rossi, S.~Bandyopadhyay, M.~Wolf, and M.~Pavone, ``Review of multi-agent
  algorithms for collective behavior: A structural taxonomy,''
  \emph{IFAC-PapersOnLine}, vol.~51, no.~12, pp. 112--117, 2018.

\bibitem{ZlizDuan}
Z.~Li, Z.~Duan, G.~Chen, and L.~Huang, ``Consensus of multi-agent systems and
  synchronization of complex networks: A unified viewpoint,'' \emph{IEEE
  Transactions on Circuits and Systems}, vol.~57, no.~1, pp. 213--224, 2009.

\bibitem{Farhat_2021}
O.~Farhat, D.~Abou~Jaoude, and M.~Hudoba~de Badyn, ``$\mathcal{H}_\infty$
  network optimization for edge consensus,'' \emph{European Journal of
  Control}, vol.~62, pp. 2--13, 2021, 2021 European Control Conference Special
  Issue.

\bibitem{9115210}
D.~R. Foight, M.~Hudoba~de Badyn, and M.~Mesbahi, ``Performance and design of
  consensus on matrix-weighted and time-scaled graphs,'' \emph{IEEE
  Transactions on Control of Network Systems}, vol.~7, no.~4, pp. 1812--1822,
  2020.

\bibitem{CHENG20172451}
X.~Cheng and J.~Scherpen, ``Balanced truncation approach to linear network
  system model order reduction,'' \emph{IFAC-PapersOnLine}, vol.~50, no.~1, pp.
  2451--2456, 2017.

\bibitem{AbouJaoudeBalancedTruncation}
D.~{Abou Jaoude} and M.~Farhood, ``Balanced truncation model reduction of
  nonstationary systems interconnected over arbitrary graphs,''
  \emph{Automatica}, vol.~85, pp. 405--411, 2017.

\bibitem{AbouJaoudeLPV}
------, ``Model reduction of distributed nonstationary {LPV} systems,''
  \emph{European Journal of Control}, vol.~40, pp. 27--39, 2018.

\bibitem{AbouJaoudeCoprimeFactors}
------, ``Coprime factors model reduction of spatially distributed {LTV}
  systems over arbitrary graphs,'' \emph{IEEE Transactions on Automatic
  Control}, vol.~62, no.~10, pp. 5254--5261, 2017.

\bibitem{AbouJaoudeCoprimeFactorsIJC}
------, ``Coprime factors reduction of distributed nonstationary {LPV}
  systems,'' \emph{International Journal of Control}, vol.~92, no.~11, pp.
  2571--2583, 2019.

\bibitem{doi:10.1146/annurev-control-061820-083817}
X.~Cheng and J.~Scherpen, ``Model reduction methods for complex network
  systems,'' \emph{Annual Review of Control, Robotics, and Autonomous Systems},
  vol.~4, pp. 425--453, 2021.

\bibitem{8795611}
X.~Cheng, L.~Yu, and J.~Scherpen, ``Reduced order modeling of linear consensus
  networks using weight assignments,'' in \emph{18th European Control
  Conference ({ECC})}, 2019, pp. 2005--2010.

\bibitem{cheng2020reduced}
X.~Cheng, L.~Yu, D.~Ren, and J.~Scherpen, ``Reduced order modeling of
  diffusively coupled network systems: An optimal edge weighting approach,''
  \emph{arXiv preprint arXiv:2003.03559}, 2020.

\bibitem{S3}
N.~Monshizadeh, H.~Trentelman, and M.~K. Camlibel, ``Projection-based model
  reduction of multi-agent systems using graph partitions,'' \emph{IEEE
  Transactions on Control of Network Systems}, vol.~1, no.~2, pp. 145--154,
  2014.

\bibitem{convexconcave}
Q.~T. Dinh, S.~Gumussoy, W.~Michiels, and M.~Diehl, ``Combining convex-concave
  decompositions and linearization approaches for solving {BMI}s, with
  application to static output feedback,'' \emph{{IEEE} Transactions on
  Automatic Control}, vol.~57, no.~6, pp. 1377--1390, 2011.

\bibitem{BEFB:94}
S.~Boyd, L.~Ghaoul, E.~Feron, and V.~Balakrishnan, \emph{Linear Matrix
  Inequalities in System and Control Theory}.\hskip 1em plus 0.5em minus
  0.4em\relax Society for Industrial and Applied Mathematics, 1994.

\bibitem{boyd_vandenberghe_2004}
S.~Boyd and L.~Vandenberghe, \emph{Convex Optimization}.\hskip 1em plus 0.5em
  minus 0.4em\relax Cambridge University Press, 2004.

\bibitem{Vandeberghe}
G.~Pipeleers, B.~Demeulenaere, J.~Swevers, and L.~Vandenberghe, ``Extended
  {LMI} characterizations for stability and performance of linear systems,''
  \emph{Systems \& Control Letters}, vol.~58, no.~7, pp. 510--518, 2009.

\bibitem{Lofberg2004}
J.~Lofberg, ``{YALMIP}: A toolbox for modeling and optimization in {MATLAB},''
  \emph{IEEE International Conference on Robotics and Automation}, pp.
  284--289, 2004.

\bibitem{MOSEK}
\BIBentryALTinterwordspacing
M.~ApS, \emph{The MOSEK optimization toolbox for MATLAB manual. Version 9.3.6},
  2021. [Online]. Available:
  \url{https://docs.mosek.com/9.3/toolbox/index.html}
\BIBentrySTDinterwordspacing

\end{thebibliography}
%%%%%%%%%%%%%%%%%%%%%%%%%%%%%%%%%%%%%%%%%%%%%%%%%%%%%%%%%%%%%%%%%%%%%%%%%%%%%%%%
\end{document}